\providecommand{\DontPrintSemicolon}{\dontprintsemicolon}
\theoremstyle{plain}
\newtheorem{theorem}{Theorem}[section]
\newtheorem{corollary}[theorem]{Corollary}
\newtheorem{lemma}[theorem]{Lemma}
\newcommand{\ceil}[1]{\ensuremath{\left\lceil{#1}\right\rceil}}%
\newcommand{\ones}{\ensuremath{n_1}}%
\newcommand{\zeroes}{\ensuremath{n_0}}%
\newcommand{\hamm}{\ensuremath{\mathcal{H}}}%
\newcommand{\ProblemFormat}[1]{{\sc #1}}
\newcommand{\ProblemName}[1]{\ProblemFormat{#1}\xspace}
\newcommand{\probkkclique}[0]{\ProblemName{$k \times k$-Clique}}
\newcommand{\probTSAT}[0]{\ProblemName{$3$-SAT}}
\newcommand{\probMAV}[0]{\ProblemName{Minimax Approval Voting}}
\newcommand{\probCS}[0]{\ProblemName{Closest String}}
\newcommand{\poly}{{\mathrm{poly}}}
\newcommand{\OPT}{{\mathrm{OPT}}}
\newcommand{\prob}[1]{\mathrm{Pr}[#1]}
\newcommand{\probBigPar}[1]{\mathrm{Pr}\left[#1\right]}
\newcommand{\expected}[1]{\mathbb{E}[#1]}
\newcommand{\expectedBigPar}[1]{\mathbb{E}\left[#1\right]}
\newcommand{\Oh}{\ensuremath{\mathcal{O}}}
\newcommand{\Ohstar}{\ensuremath{\Oh^\star}}
\begin{document}

\title{Approximation and Parameterized Complexity\\ of Minimax Approval Voting}

\date{}

\author{
Marek Cygan\thanks{University of Warsaw, Warsaw, Poland, \texttt{\{cygan, kowalik, arkadiusz.socala\}@mimuw.edu.pl}. The work of M. Cygan is a part of the project TOTAL that has received funding from the European Research Council (ERC) under the European Union’s Horizon 2020 research and innovation programme (grant agreement No 677651).
 {\L}. Kowalik and A. Soca{\l}a are supported by the National Science Centre of Poland, grant number 2013/09/B/ST6/03136.}\hspace{-10pt}
\and
{\L}ukasz Kowalik\footnotemark[1]\hspace{-10pt}
\and
Arkadiusz Soca{\l}a\footnotemark[1]\hspace{-10pt}
\and
Krzysztof Sornat\thanks{University of Wroc{\l}aw, Wroc{\l}aw, Poland, \texttt{krzysztof.sornat@cs.uni.wroc.pl}. K. Sornat was supported by the National Science Centre, Poland, grant number 2015/17/N/ST6/03684. During the work on these results, Krzysztof Sornat was an intern at Warsaw Center of Mathematics and Computer Science.}
}



\maketitle

\begin{abstract}
We present three results on the complexity of \probMAV.
First, we study \probMAV parameterized by the Hamming distance $d$ from the solution to the votes. 
We show \probMAV admits no algorithm running in time $\Ohstar(2^{o(d\log d)})$, unless the Exponential Time Hypothesis (ETH) fails. 
This means that the $\Ohstar(d^{2d})$ algorithm of Misra et al. [AAMAS 2015] is essentially optimal.
Motivated by this, we then show a parameterized approximation scheme, running in time $\Ohstar(\left({3}/{\epsilon}\right)^{2d})$, which is essentially tight assuming ETH.
Finally, we get a new polynomial-time randomized approximation scheme for \probMAV, which runs in time $n^{\Oh(1/\epsilon^2 \cdot \log(1/\epsilon))} \cdot \poly(m)$, 
almost matching the running time of the fastest known PTAS for \probCS due to Ma and Sun [SIAM J. Comp. 2009].
\end{abstract}

\section{Introduction}


One of the central problems in artificial intelligence and computational social choice is aggregating preferences of individual agents (see the overview of Conitzer~\cite{Conitzer10}).
Here we focus on {\em multi-winner choice}, where the goal is to select a $k$-element subset of a set of candidates. 
Given preferences of the agents, the subset is identified by means of a voting rule. 
This scenario covers a variety od settings: nations elect members of parliament or societies elect committees~\cite{ChamberlinCourant}, web search engines choose pages to display in response to a query~\cite{DworkKNS01}, airlines select movies available on board~\cite{SkowronFL15}, companies select a group of products to promote~\cite{LuB11}, etc.

In this work we restrict our attention to the situation where each vote (expression of the preferences of an agent) is a {\em subset} of the candidates.
Various voting rules are studied. 
In the simplest one, Approval Voting (AV), occurences of each candidate are counted and $k$ most often chosen candidates are selected.
While this rule has many desirable properties in the single winner case~\cite{fishburn1978axioms}, in the multi-winner scenario its merits are often considered less clear~\cite{laslier2010handbook}.
Therefore, numerous alternative rules have been proposed (see~\cite{Kilgour2010}), including Satifaction Approval Voting (SAV, satifaction of an agent is the fraction of her approved candidates that are elected; the goal is to maximize the total satisfaction), Proportional Approval Voting (PAV: like SAV, but satisfaction of an agent whose $j$ approved candidates are selected is the $j$-th harmonic number $H_j$), Reweighted Approval Voting (RAV: a $k$-round scheme, in each round another candidate is selected).
In this paper we study a rule called Minimax Approval Voting (MAV), introduced by Brams, Kilgour, and Sanver~\cite{brams2007minimax}. 
Here, we see the votes and the choice as 0-1 strings of length $m$ (characteristic vertors of the subsets).
The goal is to minimize the maximum Hamming distance to a vote. 
(Recall that the Hamming distance $\hamm(x,y)$ of two strings $x$ and $y$ of the same length is the number of positions where $x$ and $y$ differ.)

Our focus is on the computational complexity of computing the choice based on the MAV rule. In the \probMAV decision problem, we are given a multiset $S=\{s_1,\ldots,s_n\}$ of $0$-$1$ strings of length $m$ (also called votes), and two integers $k$ and $d$. The question is whether there exists a string $s \in \{0,1\}^m$ with exactly $k$ ones such that for every $i=1,\ldots,n$ we have $\hamm(s,s_i) \le d$. 
In the optimization version of \probMAV we minimize $d$, i.e., given a multiset $S$ and an integer $k$ as before, the goal is to find a string $s \in \{0,1\}^m$ with exactly $k$ ones which minimizes $\max_{i=1,\ldots,n}\hamm(s,s_i)$. 

A reader familiar with string problems might recognize that \probMAV is tightly connected with the classical NP-complete problem called \probCS, where we are given $n$ strings over an alphabet $\Sigma$ and the goal is to find a string that minimizes the maximum Hamming distance to the given strings. Indeed, LeGrand~\cite{LeGrand04} showed that \probMAV is NP-complete as well by reduction from \probCS with binary alphabet. 
This motivated the study on \probMAV in terms of approximability and fixed-parameter tractability.

\paragraph{Previous results on \probMAV}

First approximation result was a simple 3-approximation algorithm due to LeGrand, Markakis and Mehta~\cite{LeGrandMM07}, obtained by choosing an arbitrary vote and taking any $k$ approved candidates from the vote (extending it arbitrarily to $k$ candidates if needed). 
Next, a 2-approximation was shown by Caragiannis, Kalaitzis and Markakis using an LP-rounding procedure~\cite{CaragiannisKM10}. 
Finally, recently Byrka and Sornat~\cite{ByrkaS14} presented a polynomial time approximation scheme (PTAS), i.e., an algorithm that for any fixed $\epsilon > 0$ gives a $(1+\epsilon)$-approximate solution in polynomial time. 
More precisely, their algorithm runs in time $m^{\Oh(1/\epsilon^4)} + n^{\Oh(1/\epsilon^3)}$ what is polynomial on number of voters $n$ and number of alternatives $m$. 
The PTAS uses information extraction techniques from fixed size ($\Oh(1/\epsilon)$) subsets of voters and random rounding of the optimal solution of a linear program.

In the area of fixed parameter tractability (FPT) the goal is to find algorithms with running time of the form $f(r)\poly(|I|)$, where $|I|$ is the size of the input istance $I$, $r$ is a parameter and $f$ is a function, which is typically at least exponential for NP-complete problems. For more about paremeterized algorithms see the textbook of Cygan et al.~\cite{CyganFKLMPPS15} or the survey of Bredereck et al.\cite{BredereckCFGNW14} (in the context of computational social choice).
The study of FPT algorithms for \probMAV was initiated by Misra, Nabeel and Singh~\cite{MisraNS15}. 
They show for example that \probMAV parameterized by the number of ones in the solution $k$ (i.e. $k$ is the paramater $r$) is $W[2]$-hard, which implies that there is no FPT algorithm, unless there is a highly unexpected collapse in parameterized complexity classes. From a positive perspective, they show that the problem is FPT when parameterized by the maximum allowed distance $d$.
Their algorithm runs in time\footnote{The $\Ohstar$ notation suppresses factors polynomial in the input size.} $\Ohstar(d^{2d})$\footnote{Actually, in the article~\cite{MisraNS15} the authors claim the slightly better running time of $\Ohstar(d^{d})$. However, it seems there is a flaw in the analysis: it states that the initial solution $v$ is at distance at most $d$ from the solution, while it can be at distance $2d$ because of what we call here the $k$-completion operation. This increases the maximum depth of the recursion to $d$ (instead of the claimed $d/2$).}.

\paragraph{Previous results on \probCS}

It is interesting to compare the known results on \probMAV with the corresponding ones on the better researched \probCS.
The first PTAS for \probCS was given by Li, Ma and Wang~\cite{LiMW02} with running time bounded by $n^{\Oh(1/\epsilon^4)}$.
This was later improved by Andoni et al.~\cite{AndoniIP06} to $n^{\Oh(\frac{\log 1/\epsilon}{\epsilon^2})}$, and then by Ma and Sun~\cite{MaS09} to $n^{\Oh(1/\epsilon^2)}$.

The first FPT algorithm for \probCS, running in time $\Ohstar(d^d)$ was given by Gramm, Niedermeier, and Rossmanith~\cite{GrammNR03}.
This was later improved by Ma and Sun~\cite{MaS09}, who gave an algorithm with running time $\Ohstar(2^{\Oh(d)}\cdot |\Sigma|^d)$, which is more efficient for constant-size alphabets. 
No further substantial progress is possible, since Lokshtanov, Marx and Saurabh~\cite{LokshtanovMS11} have shown that \probCS admits no algorithms in time $\Ohstar(2^{o(d\log d)})$ or $\Ohstar(2^{o(d\log |\Sigma|)})$ , unless the Exponential Time Hypothesis (ETH)~\cite{eth} fails.

The discrepancy between the state of the art for \probCS and \probMAV raises interesting questions. 
First, does the additional constraint in \probMAV really makes the problem harder and the PTAS has to be significantly slower?
Similarly, although in \probMAV the alphabet is binary, no $\Ohstar(2^{\Oh(d)})$-time algorithm is known, in contrary to \probCS. 
Can we find such an algorithm? 
The goal of this work is to answer these questions.


\paragraph{Our results}
We present three results on the complexity of \probMAV.
Let us recall that the Exponential Time Hypothesis (ETH) of Impagliazzo et al.~\cite{eth} states that there exists a constant $c > 0$, such that there is no algorithm solving \probTSAT in time $\Ohstar(2^{cn})$. During the recent years, ETH became the central conjecture used for proving tight bounds on the complexity of various problems, see~\cite{eth-survey} for a survey. 
We begin from showing that, unless the ETH fails, there is no algorithm for \probMAV running in time $\Ohstar(2^{o(d\log d)})$. 
In other words, the algorithm of Misra et al.~\cite{MisraNS15} is essentially optimal, and indeed, in this sense \probMAV is harder than \probCS.
Motivated by this, we then show a parameterized approximation scheme, i.e., a randomized Monte-Carlo algorithm which, given an instance $(S,k,d)$ and a number $\epsilon>0$, finds a solution at distance at most $(1+\epsilon)d$ in time $\Ohstar(\left({3}/{\epsilon}\right)^{2d})$ or reports that there is no solution at distance at most $d$.
Note that our lower bound implies that, under (randomized version of) ETH, this is essentially optimal, i.e., there is no parameterized approximation scheme running in time $\Ohstar(2^{o(d\log (1/\epsilon))})$. 
Indeed, if such an algorithm existed, by picking $\epsilon = 1/(d+1)$ we get an exact algortihm which contradicts our lower bound.
Finally, we get a new polynomial-time randomized approximation scheme for \probMAV, which runs in time $n^{\Oh(1/\epsilon^2 \cdot \log(1/\epsilon))} \cdot \poly(m)$.
Thus the running time almost matches the one of the fastest known PTAS for \probCS (up to a $\log(1/\epsilon)$ factor in the exponent).

\paragraph{Organization of the paper} 
In Section~\ref{sec:definitions} we introduce some notation and we recall standard probabability bounds that are used later in the paper. 
In Section~\ref{sec:lowerbound} we present our lower bound for \probMAV parameterized by $d$.
Next, in Section~\ref{sec:param-aprox-sch} we show a parameterized approximation scheme.
Finally, in Section~\ref{sec:newptas} we show a new randomized PTAS. 
The paper concludes with Section~\ref{sec:concludingremarks}, where we discuss directions for future work.

\section{Definitions and Preliminaries}\label{sec:definitions}

For every integer $n$ we denote $[n] = \{1,2,\dots,n\}$.
For a set of words $S\subseteq\{0,1\}^m$ and a word $x\in\{0,1\}^m$ we denote $\hamm(x,S)=\max_{s\in S}\hamm(x,s)$.
For a string $s \in \{0,1\}^m$, the number of $1$'s in $s$ is denoted as $\ones(s)$ and it is also called the Hamming weight of $s$; similarly $\zeroes(s)=m-\ones(s)$ denotes the number of zeroes.
Moreover, the set of all strings of length $m$ with $k$ ones is denoted by $S_{k,m}$, i.e., $S_{k,m} = \{s \in \{0,1\}^m: \ones(s) = k\}$. $s[j]$ means $j$-th letter of a string $s$. For a subset of positions $P \subseteq [m]$ we define a subsequence $s|_P$ by removing letters on positions $[m] \setminus P$ from $s$.

For a string $s\in\{0,1\}^m$, any string $s'\in S_{k,m}$ at distance $|\ones(s)-k|$ from $s$ is called a {\em $k$-completion} of $s$.
Note that it is easy to find such a $k$-completion $s'$: when $\ones(s)\ge k$ we obtain $s'$ by replacing arbitrary $\ones(s)-k$ ones in $s$ by zeroes; similarly when $\ones(s)< k$ we obtain $s'$ by replacing arbitrary $k-\ones(s)$ zeroes in $s$ by ones.

We will use the following standard Chernoff bounds (see e.g.Chapter 4.1 in \cite{MotwaniR95}).
\begin{theorem}\label{def:chernoffbounds}
 Let $X_1, X_2, \dots, X_n$ be $n$ independent random $0$-$1$ variables such that for every $i=1,\ldots,n$ we have $\probBigPar{X_i=1} = p_i$, for $p_i \in [0,1]$. Let $X = \sum_{i=1}^n X_i$. Then,
 \begin{itemize}
  \item  for any $0 < \epsilon \le 1$ we have:
  \begin{equation}\label{eq:chernoff01plus}
   \probBigPar{X > (1+\epsilon) \cdot \expectedBigPar{X}} \le \exp\left(-\tfrac{1}{3} \epsilon^2 \cdot \expectedBigPar{X} \right)
  \end{equation}
  \begin{equation}\label{eq:chernoff01minus}
   \probBigPar{X < (1-\epsilon) \cdot \expectedBigPar{X}} \le \exp\left(-\tfrac{1}{2} \epsilon^2 \cdot \expectedBigPar{X} \right)
  \end{equation}
  \item for any $1 < \epsilon$ we have:
  \begin{align}
   \probBigPar{X > (1+\epsilon) \cdot \expectedBigPar{X}} &\le \exp\left(-\tfrac{1}{3} \epsilon \cdot \expectedBigPar{X} \right) \label{eq:chernoff1plus} \\
   \probBigPar{X < (1-\epsilon) \cdot \expectedBigPar{X}} &= 0 \label{eq:chernoff1minus}
  \end{align}
  
 \end{itemize}
\end{theorem}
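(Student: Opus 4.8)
The plan is to prove this by the exponential moment (Bernstein--Chernoff) method: apply Markov's inequality to $e^{tX}$, use independence to factor the moment generating function, bound each factor by $1+x\le e^{x}$, optimize over $t$ to get one ``master'' tail estimate, and only then specialize to the four displayed bounds by elementary estimates on a scalar function of $\epsilon$.

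Concretely, write $\mu=\expected{X}=\sum_{i=1}^n p_i$. For the upper tail, fix $t>0$; since $e^{tX}\ge 0$, Markov's inequality gives $\prob{X>(1+\epsilon)\mu}\le e^{-t(1+\epsilon)\mu}\expected{e^{tX}}$. By independence $\expected{e^{tX}}=\prod_{i=1}^n\expected{e^{tX_i}}=\prod_{i=1}^n\bigl(1+p_i(e^t-1)\bigr)$, and applying $1+x\le e^{x}$ with $x=p_i(e^t-1)$ yields $\expected{e^{tX}}\le\exp\bigl((e^t-1)\mu\bigr)$. Substituting and choosing $t=\ln(1+\epsilon)$, which is the minimizer of $(e^t-1)-t(1+\epsilon)$, gives the master bound
\begin{equation*}
 \prob{X>(1+\epsilon)\mu}\;\le\;\exp\bigl(\mu(\epsilon-(1+\epsilon)\ln(1+\epsilon))\bigr)\;=\;\left(\frac{e^{\epsilon}}{(1+\epsilon)^{1+\epsilon}}\right)^{\mu},
\end{equation*}
valid for all $\epsilon>0$. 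Symmetrically, for the lower tail with $0<\epsilon<1$, running the same argument for $e^{-sX}$ with $s=-\ln(1-\epsilon)>0$ gives $\prob{X<(1-\epsilon)\mu}\le\exp\bigl(\mu(-\epsilon-(1-\epsilon)\ln(1-\epsilon))\bigr)=\bigl(e^{-\epsilon}/(1-\epsilon)^{1-\epsilon}\bigr)^{\mu}$. For $\epsilon>1$, the bound \eqref{eq:chernoff1minus} is immediate: $X\ge 0$ while $(1-\epsilon)\mu<0$, so the event $\{X<(1-\epsilon)\mu\}$ is empty.

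The remaining step is to convert these master bounds into the stated exponential forms, and this scalar calculus is the only part requiring real (though entirely routine) work. Using the series $(1+y)\ln(1+y)=y+\sum_{k\ge 2}\frac{(-1)^k y^k}{k(k-1)}$: for $0<\epsilon\le 1$ the series for $(1+\epsilon)\ln(1+\epsilon)-\epsilon$ alternates with decreasing terms, so it is at least $\frac{\epsilon^2}{2}-\frac{\epsilon^3}{6}=\frac{\epsilon^2}{6}(3-\epsilon)/ \,\ge\, \frac{\epsilon^2}{3}$, giving \eqref{eq:chernoff01plus}; and putting $y=-\epsilon$ shows $(1-\epsilon)\ln(1-\epsilon)+\epsilon=\frac{\epsilon^2}{2}+\frac{\epsilon^3}{6}+\cdots\ge\frac{\epsilon^2}{2}$, giving \eqref{eq:chernoff01minus}. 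For $\epsilon>1$ one checks that $(1+\epsilon)\ln(1+\epsilon)-\epsilon\ge\frac{\epsilon}{3}$ by noting that $h(\epsilon)=(1+\epsilon)\ln(1+\epsilon)-\tfrac{4}{3}\epsilon$ satisfies $h(1)>0$ and $h'(\epsilon)=\ln(1+\epsilon)-\tfrac13>0$ on $[1,\infty)$, giving \eqref{eq:chernoff1plus}. Multiplying these inequalities by $\mu$ in the exponent yields the four displayed bounds. Thus the main obstacle is purely the one-variable analysis of the functions $\epsilon\mapsto\epsilon-(1+\epsilon)\ln(1+\epsilon)$ and $\epsilon\mapsto-\epsilon-(1-\epsilon)\ln(1-\epsilon)$; the probabilistic content is the short standard argument above, and one may equally well simply invoke Chapter~4.1 of~\cite{MotwaniR95}.
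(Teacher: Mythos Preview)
Your proof is correct and follows the standard exponential-moment (Bernstein--Chernoff) route; the scalar estimates you use to pass from the master bounds to the four displayed inequalities are all valid (modulo a stray ``/'' after $\frac{\epsilon^2}{6}(3-\epsilon)$, which is clearly a typo). The paper itself does not prove this theorem at all --- it is stated as a standard fact with a citation to Chapter~4.1 of Motwani--Raghavan --- so what you have written is precisely the argument the paper defers to the reference, and you even note this yourself in your final sentence.
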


\section{A lower bound}
\label{sec:lowerbound}

In this section we show a lower bound for \probMAV parameterized by $d$. To this end, we use a reduction from a problem called \probkkclique.
In \probkkclique we are given a graph $G$ over the vertex set $V = [k] \times [k]$, i.e., $V$ forms a grid with $k$ rows and $k$ columns, and the question is whether in $G$ there is a clique containing exactly one vertex in each row.

\begin{lemma}
\label{lem:kk-red}
Given an instance $I = (G,k)$ of \probkkclique with $k\ge 2$, one can construct an instance $I' = (S, k, d)$ of \probMAV, such that $I'$ is a yes-instance iff $I$ is a yes-instance, $d=3k-3$ and the set $S$ contains $O(k^2 \binom{2k-2}{k-2})$ strings of length $k^2+2k-2$ each.
The construction takes time polynomial in the size of the output.
\end{lemma}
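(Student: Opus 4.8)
The plan is to reduce from \probkkclique, encoding a choice of one vertex per row directly as the $k$ ones of the \probMAV solution and using two families of votes — selection votes and adjacency votes — that certify, respectively, that the chosen vertices form a system of distinct-row representatives and that every two of them are adjacent. Concretely, I take $m=k^2+2k-2$ coordinates: $k^2$ \emph{vertex coordinates}, one for each $v\in[k]\times[k]$, grouped into rows $R_1,\dots,R_k$ with $|R_i|=k$, plus $2k-2$ \emph{dummy coordinates} forming a set $D$; I set $d=3k-3$ and keep the same $k$. For a clique $C$ with exactly one vertex per row, the intended certificate $s_C\in S_{k,m}$ is the string that is $1$ on the vertex coordinates of $C$ and $0$ on $D$. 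The votes are: for every row $i$ and every $A\subseteq D$ with $|A|=k-2$, the characteristic vector of $R_i\cup A$; and for every non-edge $\{u,v\}$ of $G$ whose endpoints lie in distinct rows, $u\in R_i$ and $v\in R_j$, the characteristic vector of $(R_i\cup R_j)\setminus\{u,v\}$. Every vote has Hamming weight exactly $2k-2$.

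The computational workhorse is the identity that for $s$ with exactly $k$ ones and a vote $w$ with $\ones(w)=2k-2$, if $s$ and $w$ have $a$ common $1$-positions then $\hamm(s,w)=k+(2k-2)-2a=3k-2-2a$. For the easy direction, suppose $C$ is a clique with one vertex per row. Against a selection vote for $R_i\cup A$ we have $a=1$ (the unique vertex of $C$ in $R_i$; no one of $s_C$ lies in $D$), so $\hamm(s_C,w)=3k-3=d$; against the adjacency vote for a non-edge $\{u,v\}$ in rows $i,j$, the two vertices of $C$ in $R_i$ and $R_j$ are adjacent, so their unordered pair differs from $\{u,v\}$, at least one of them survives in $(R_i\cup R_j)\setminus\{u,v\}$, $a\ge1$, and $\hamm(s_C,w)\le 3k-4<d$. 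Hence $s_C$ is a solution of $I'$.

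For the converse, let $s$ have exactly $k$ ones and satisfy $\hamm(s,w)\le d$ for every vote $w$. I first claim $s$ has exactly one $1$ in each $R_i$ and none in $D$: if some $R_{i_0}$ carried no $1$ of $s$, then at most $k$ ones of $s$ sit in $D$, and since $|D|=2k-2$ there is a $(k-2)$-subset $A\subseteq D$ disjoint from the support of $s$; the selection vote for $(i_0,A)$ then has $a=0$, i.e.\ $\hamm(s,w)=3k-2>d$, a contradiction. As $s$ has $k$ ones and meets each of the $k$ rows, it has exactly one $1$ per row and no $1$ in $D$. Writing the chosen vertex of row $i$ as $(i,c_i)$, if some pair $\{(i,c_i),(j,c_j)\}$ with $i\ne j$ were a non-edge, then the adjacency vote for it would have support disjoint from that of $s$ — the only ones of $s$ inside $R_i\cup R_j$ are $(i,c_i)$ and $(j,c_j)$, both removed from the vote — so $\hamm(s,w)=3k-2>d$, again a contradiction. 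Thus $\{(i,c_i):i\in[k]\}$ is a clique with one vertex per row and $I$ is a yes-instance; combined with the previous paragraph, $I$ and $I'$ are equivalent.

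Finally, there are $k\binom{2k-2}{k-2}$ selection votes and at most $\binom{k^2}{2}$ adjacency votes; since the latter is polynomial in $k$ while $\binom{2k-2}{k-2}$ is eventually superpolynomial, the total is $O(k^2\binom{2k-2}{k-2})$, every string has length $k^2+2k-2$, and writing them all down (together with $s_C$) is clearly polynomial in the output size. The step I expect to be the real obstacle is fitting the gadgets to the tight budget $d=3k-3$ with the correct orientation. Because the \probMAV objective is symmetric in zeros and ones, a vote that naively marked the forbidden pair $\{u,v\}$ would make a non-clique solution \emph{close} to it rather than far; the fix is to mark the complementary coordinates $(R_i\cup R_j)\setminus\{u,v\}$, precisely the positions a valid certificate occupies but the targeted bad solution does not. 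Dually, $|D|=2k-2$ is exactly the least value for which a solution that tries to hide all $k$ of its ones inside $D$ still leaves some $(k-2)$-subset of $D$ untouched, and is therefore caught by a selection vote — this is the combinatorial heart of the construction and the source of the $\binom{2k-2}{k-2}$ factor.
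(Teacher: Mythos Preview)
Your construction and correctness argument are essentially identical to the paper's: your ``selection votes'' are the paper's \emph{row strings} $s_{i,X}$, your ``adjacency votes'' are its \emph{nonedge strings} $s_{vv'}$, and both proofs hinge on the observation that with all votes having Hamming weight $2k-2$, the distance constraint $\hamm(s,w)\le 3k-3$ is equivalent to $s$ and $w$ sharing at least one $1$-position. The forward and backward directions, as well as the counting, match the paper's proof step for step (one cosmetic slip: $s_C$ is the intended solution, not part of the output instance $I'$).
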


\begin{proof}
Each string in the set $S$ will be of size $m = k^2+2k-2$.
Let us split the set of positions $[m]$ into $k+1$ blocks, where the first 
$k$ blocks contain exactly $k$ positions each, and the last $(k+1)$-th block
contains the remaining $2k-2$ positions.
Our construction will enforce that if a solution exists, it will
have the following structure:
there will be a single $1$ in each of the first $k$ blocks and put all zeros in the last block.
Intuitively the position of the $1$ in the first block encodes the clique vertex of the first row of $G$,
the position of the $1$ in the second block encodes the clique vertex of the second row of, etc.

We construct the set $S$ as follows.
\begin{itemize}
  \item {\bf (nonedge strings)} For each pair of nonadjacent vertices $v, v' \in V(G)$ of $G$
  belonging to different rows, i.e., $v = (a,b)$, $v' = (a',b')$, $a \neq a'$, 
  we add to $S$ a string $s_{vv'}$, where all the blocks except $a$-th and $a'$-th
  are filled with zeros, while the blocks $a$, $a'$ are filled with ones, except
  the $b$-th position in block $a$ and the $b'$-th position in block $a'$ which are zeros (see Fig.~\ref{fig:nonedge-string}).
  Formally, $s_{vv'}$ contains ones at positions $\{(a-1)k + j : j \in [k], j \neq b\} \cup \{(a'-1)k + j : j \in [k], j \neq b'\}$.
  Note that the Hamming weight of $s_{vv'}$ equals $2k-2$.
  \item {\bf (row strings)} For each row $i \in [k]$ we create exactly $\binom{2k-2}{k-2}$ strings,
  i.e., for $i \in [k]$ and for each set $X$ of exactly $k-2$ positions in the $(k+1)$-th block
  we add to $S$ a string $s_{i,X}$ having ones at all positions of the $i$-th block
  and at $X$, all the remaining positions are filled with zeros (see Fig.~\ref{fig:row-string}).
  Note that similarly as for the nonedge strings the Hamming weight of each row string equals $2k-2$,
  and to achieve this property we use the $(k+1)$-th block.
\end{itemize}

\begin{figure}[t]
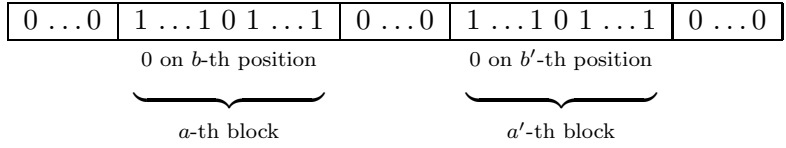

	\begin{center}
		\begin{tabular}{|c|c|c|c|c|}
			\hline
			0 \ldots 0           & 1 \ldots 1 0 1 \ldots 1               &
			0 \ldots 0           & 1 \ldots 1 0 1 \ldots 1               & 0 \ldots 0           \\ \hline
			\multicolumn{1}{c}{} & \multicolumn{1}{c}{\scriptsize $0$ on $b$-th position}  &
			\multicolumn{1}{c}{} & \multicolumn{1}{c}{\scriptsize $0$ on $b'$-th position} & \multicolumn{1}{c}{} \\
			\multicolumn{1}{c}{} & \multicolumn{1}{c}{\upbracefill}      &
			\multicolumn{1}{c}{} & \multicolumn{1}{c}{\upbracefill}      & \multicolumn{1}{c}{} \\
			\multicolumn{1}{c}{} & \multicolumn{1}{c}{\scriptsize $a$-th block}  &
			\multicolumn{1}{c}{} & \multicolumn{1}{c}{\scriptsize $a'$-th block} & \multicolumn{1}{c}{} \\
		\end{tabular}
	\end{center}
	\caption{Nonedge string.}
	\label{fig:nonedge-string}
\end{figure}

\begin{figure}[t]
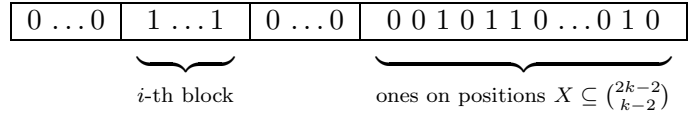

	\begin{center}
		\begin{tabular}{|c|c|c|c|}
			\hline
			0 \ldots 0           & 1 \ldots 1                                                             &
			0 \ldots 0           & 0 0 1 0 1 1 0 \ldots 0 1 0                                             \\ \hline
			\multicolumn{1}{c}{} & \multicolumn{1}{c}{\upbracefill}                                       &
			\multicolumn{1}{c}{} & \multicolumn{1}{c}{\upbracefill}                                       \\
			\multicolumn{1}{c}{} & \multicolumn{1}{c}{\scriptsize $i$-th block}                                   &
			\multicolumn{1}{c}{} & \multicolumn{1}{c}{\scriptsize ones on positions $X \subseteq {2k - 2 \choose k - 2}$}  \\
		\end{tabular}
	\end{center}
	\caption{Row string.}
	\label{fig:row-string}
\end{figure}

To finish the description of the created instance $I'=(S, k, d)$ we need to define the target distance $d$, which we set $d=3k-3$. Observe that as the Hamming weight of each string $s' \in S$ equals $2k-2$, for $s \in \{0,1\}^m$ with exactly $k$ ones we have $\hamm(s,s') \le d$ if and only if
the positions of ones in $s$ and $s'$ have a non-empty intersection.

Let us assume that there is a clique $K$ in $G$ of size $k$ containing
exactly one vertex from each row.
For $i \in [k]$ let $j_i \in [k]$ be the column number of the vertex of $K$
from row $i$.
Define $s$ as a string containing ones exactly at positions $\{(i-1)k + j_i : i \in [k]\}$,
i.e., the $(k+1)$-th block contains only zeros and for $i\in[k]$ the $i$-th
block contains a single $1$ at position $j_i$.
Obviously $s$ contains exactly $k$ ones, hence it suffices
to show that $s$ has at least one common one with each of the strings in $S$.
This is clear for the row strings, as each row string contains a block full of ones.
For a nonedge string $s_{vv'}$, where $v = (a,b)$ and $v'=(a',b')$ 
note that $K$ does not contain $v$ and $v'$ at the same time.
Consequently $s$ has a common one with $s_{vv'}$ in at least one of the blocks $a$, $a'$.

In the other direction, assume that $s$ is a string of length $m$ with
exactly $k$ ones such that the Hamming distance between $s$ and each 
of the strings in $S$ is at most $d$, which by construction implies that $s$
as a common one with each of the strings in $S$.
First, we are going to prove that $s$ contains a $1$
in each of the first $k$ blocks (and consequently has only zeros in block $k+1$).
For the sake of contradiction assume that this is not the case.
Consider a block $i \in [k]$ containing only zeros.
Let $X$ be any set of $k-2$ positions in block $k+1$ containing zeros from $s$
(such a set exists as block $k+1$ has $2k-2$ positions).
But the row string $s_{i,X}$ has $2k-2$ ones at positions where $s$ has zeros,
and consequently $\hamm(s,s_{i,X}) = k+(2k-2) = 3k-2 > d = 3k-3$, a contradiction.

As we know that $s$ contains exactly one one in each of the first $k$ blocks
let $j_i \in [k]$ be such a position of block $i \in [k]$.
Create $X \subseteq V(G)$ by taking the vertex from column $j_i$ for each row $i \in [k]$.
Clearly $X$ is of size $k$ and it contains exactly one vertex from each row, hence
it remains to prove that $X$ is a clique in $G$.
Assume the contrary and let $v, v' \in X$ be two distinct nonadjacent vertices of $X$,
where $v = (i,j_i)$ and $v' = (i',j_{i'})$.
Observe that the nonedge string $s_{vv'}$ contains zeros at the $j_i$-th position of the $i$-th
block and at the $j_{i'}$-th position of the $i'$-th block.
Since for $i'' \in [k]$, $i'' \neq i$, $i''\neq i$ block $i''$ of $s_{vv'}$ contains only
zeros, we infer that the sets of positions of ones of $s$ and $s_{vv'}$ are disjoint
leading to $\hamm(s,s_{vv'}) = k+(2k-2) = 3k-2 > d$, a contradiction.

As we have proved that $I$ is a yes-instance of \probkkclique iff $I'$ is a yes-instance of \probMAV,
the lemma follows.
\end{proof}

In order to derive an ETH-based lower bound we need the following theorem of Lokshtanov, Marx and Saurabh~\cite{LokshtanovMS11}.

\begin{theorem}
\label{thm:kxk-eth}
 Assuming ETH, there is no $2^{o(k \log k)}$-time algorithm for \probkkclique.
\end{theorem}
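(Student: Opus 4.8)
The plan is to prove Theorem~\ref{thm:kxk-eth} by a reduction that ``packs'' a logarithmic number of Boolean/coloring choices into each row of the $[k]\times[k]$ grid, so that an instance on $n$ variables is mapped to a \probkkclique instance with only $k=\Theta(n/\log n)$ rows; then a $2^{o(k\log k)}$-time algorithm for \probkkclique would, since $k\log k=\Theta(n)$, yield a $2^{o(n)}$-time algorithm for the source problem, contradicting ETH.

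First I would replace ETH by its stronger consequence via the Sparsification Lemma of Impagliazzo, Paturi and Zane: there is no algorithm solving \probTSAT on $n$ variables and $m=\Oh(n)$ clauses in time $2^{o(n+m)}=2^{o(n)}$. Since the textbook reduction from \probTSAT to $3$-Coloring is linear (it outputs a graph with $\Oh(n+m)$ vertices and edges), it is convenient to work with $3$-Coloring instead: ETH implies that $3$-Coloring on a graph $H$ with $N$ vertices and $\Oh(N)$ edges cannot be solved in time $2^{o(N)}$. The reason to pass to $3$-Coloring is that all of its constraints are \emph{binary} (one per edge), which is exactly what the grid adjacency of \probkkclique encodes; a direct reduction from \probTSAT would have to handle clauses touching up to three blocks at once.

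Next comes the construction. Given such an $H$ with $|V(H)|=N$, I would pick an integer $k$ minimal subject to $k\ge 3^{\lceil N/k\rceil}$. Such $k$ exists and satisfies $k=\Theta(N/\log N)$: for $k=cN/\log N$ with $c$ a large enough constant one has $3^{\lceil N/k\rceil}=N^{\Oh(1/c)}=o(N/\log N)\le k$, whereas for $k<N/\log_3 N$ one gets $3^{\lceil N/k\rceil}>N>k$. Partition $V(H)$ into $k$ blocks $B_1,\dots,B_k$, each of size at most $\lceil N/k\rceil$; if some $H[B_i]$ is not $3$-colorable, output a trivial no-instance. Otherwise the rows of the grid are the blocks, and in row $i$ the $k$ columns are associated cyclically with the proper $3$-colorings of $H[B_i]$ (there are at least one and at most $3^{\lceil N/k\rceil}\le k$ of them). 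Put an edge between $(i,j)$ and $(i',j')$ with $i\ne i'$ exactly when the union of the corresponding partial colorings of $B_i$ and $B_{i'}$ is a proper coloring of $H[B_i\cup B_{i'}]$, i.e.\ no edge of $H$ between $B_i$ and $B_{i'}$ becomes monochromatic. The grid has $k^2=\poly(N)$ vertices and each adjacency test inspects $\Oh(N)$ edges, so the construction runs in time polynomial in $N$ (and in the output size).

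For correctness I would argue both directions. A proper $3$-coloring $c$ of $H$ restricts to a proper partial coloring on every block, and any two of these restrictions are compatible because every edge of $H$ lies either inside a single block or between exactly two blocks; hence the corresponding $k$ grid vertices, one per row, form a clique. Conversely, a clique of \probkkclique with one vertex per row selects a proper partial coloring of each $B_i$, and pairwise compatibility together with properness inside each block means the union is a proper $3$-coloring of all of $H$. Thus $H$ is $3$-colorable iff the produced \probkkclique instance is a yes-instance, and a hypothetical $2^{o(k\log k)}=2^{o(N)}$-time algorithm contradicts the ETH-based lower bound for $3$-Coloring. I expect the only delicate points to be bookkeeping: verifying that $k$ can be taken $\Theta(N/\log N)$ while still satisfying $k\ge 3^{\lceil N/k\rceil}$ (so the square grid is large enough to host all partial colorings), and checking that the cyclic padding of unused columns and the handling of non-$3$-colorable blocks introduce no spurious solutions.
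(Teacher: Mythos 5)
The paper does not prove this statement at all: it is quoted as a black box from Lokshtanov, Marx and Saurabh~\cite{LokshtanovMS11}, so there is no in-paper proof to compare against. Your argument is a correct, self-contained derivation, and it follows essentially the same route as the cited source: sparsify, pass to $3$-Coloring to get binary constraints, split the $N$ vertices into $k=\Theta(N/\log N)$ groups so that each group has at most $3^{\lceil N/k\rceil}\le k$ proper colorings (one per column), and encode pairwise compatibility as grid edges, giving $k\log k=\Theta(N)$. The details you flag as delicate (existence and order of the minimal $k$ with $k\ge 3^{\lceil N/k\rceil}$, padding by repeating colorings within a row, rejecting when some block is not $3$-colorable) all check out and introduce no spurious solutions, since a clique uses one vertex per row and duplicated columns carry identical adjacencies.
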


We are ready to prove the main result of this section.

\begin{theorem}
 Assuming ETH, there is no $2^{o(d \log d)}\poly(n,m)$-time algorithm for \probMAV.
\end{theorem}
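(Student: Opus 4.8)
The plan is to combine the reduction of Lemma~\ref{lem:kk-red} with the conditional lower bound for \probkkclique from Theorem~\ref{thm:kxk-eth}. Suppose, for the sake of contradiction, that there is an algorithm $\mathcal{A}$ solving \probMAV in time $2^{o(d\log d)}\poly(n,m)$. Given an instance $I=(G,k)$ of \probkkclique with $k\ge 2$, I would first invoke Lemma~\ref{lem:kk-red} to build, in time polynomial in its output, an equivalent instance $I'=(S,k,d)$ of \probMAV, and then run $\mathcal{A}$ on $I'$. Since $I'$ is a yes-instance if and only if $I$ is, this decides \probkkclique, and it only remains to bound the total running time.

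The key point is to track how the parameters behave through the reduction. In $I'$ we have $d=3k-3=\Theta(k)$ (so $d\log d=\Theta(k\log k)$), word length $m=k^2+2k-2=\poly(k)$, and number of words $n=O\!\left(k^2\binom{2k-2}{k-2}\right)$, which is $2^{O(k)}$ since $\binom{2k-2}{k-2}\le 2^{2k-2}$. Hence constructing $I'$ takes $2^{O(k)}$ time, and running $\mathcal{A}$ on $I'$ takes $2^{o(d\log d)}\poly(n,m)=2^{o(k\log k)}\cdot 2^{O(k)}=2^{o(k\log k)}$, because the polynomial overhead in $n=2^{O(k)}$ is itself only $2^{O(k)}=2^{o(k\log k)}$. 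Therefore the whole procedure runs in $2^{o(k\log k)}$ time and correctly solves \probkkclique, contradicting Theorem~\ref{thm:kxk-eth} under ETH; the theorem follows.

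The only step that needs a little care is exactly this accounting of the running time: because the number of strings produced by the reduction is exponential in $k$, one must check that the $\poly(n,m)$ factor of the hypothetical algorithm, evaluated on an instance whose size is $2^{O(k)}$, still stays below $2^{o(k\log k)}$ — which it does, since $2^{O(k)}$ is subexponential in $k\log k$, and likewise that $d=\Theta(k)$ makes $2^{o(d\log d)}$ and $2^{o(k\log k)}$ interchangeable. Beyond this bookkeeping there is no real obstacle; all the substantive work has already been done in Lemma~\ref{lem:kk-red}.
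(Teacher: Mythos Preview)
Your proof is correct and follows essentially the same approach as the paper: apply Lemma~\ref{lem:kk-red}, observe that $d=\Theta(k)$, $m=\poly(k)$, and $n=2^{O(k)}$, and conclude that a $2^{o(d\log d)}\poly(n,m)$ algorithm for \probMAV would solve \probkkclique in time $2^{o(k\log k)}\cdot 2^{O(k)}=2^{o(k\log k)}$, contradicting Theorem~\ref{thm:kxk-eth}. The paper's proof is just a more terse version of exactly this argument.
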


\begin{proof}
 Using Lemma~\ref{lem:kk-red}, the input instance $G$ of \probkkclique is transformed into an equivalent instance $I'=(S,k,d)$ of \probMAV, where $n = |S| = O(k^2 \binom{2k-2}{k-2}) = 2^{O(k)}$, each string of $S$ has length $m=O(k^2)$ and $d=\Theta(k)$. 
 It follows that a $2^{o(d \log d)}\poly(n,m)$-time algorithm for \probMAV solves \probkkclique in time $2^{o(k \log k)} 2^{O(k)}=2^{o(k \log k)}$, which contradicts ETH by Theorem~\ref{thm:kxk-eth}.
 \end{proof}

%

\section{Parameterized approximation scheme}
\label{sec:param-aprox-sch}

In this section we show the following theorem. 

\begin{theorem}\label{thm:ptasForSmallOPT}
 There exists a randomized algorithm which, given an instance $(\{s_i\}_{i=1,\ldots,n},k,d)$ of \probMAV and any $\epsilon \in(0,3)$, runs in time $\mathcal{O}\left(\left(\frac{3}{\epsilon}\right)^{2d}mn\right)$ and either
 \begin{enumerate}[$(i)$]
  \item reports a solution at distance at most $(1+\epsilon)d$ from $S$, or
  \item reports that there is no solution at distance at most $d$ from $S$.
 \end{enumerate}
 In the latter case, the answer is correct with probabability at least $1-p$, for arbitrarily small fixed $p>0$.
\end{theorem}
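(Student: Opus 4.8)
The plan is a randomized bounded-search-tree algorithm, in the spirit of the branching algorithms for \probCS, but tuned so that the branching factor is $3/\epsilon$ rather than (essentially) $d$, at the price of producing only a $(1+\epsilon)$-approximate solution. Fix a would-be solution $s^\star$, i.e.\ $\ones(s^\star)=k$ and $\hamm(s^\star,s_i)\le d$ for all $i$, and let $s_0$ be any $k$-completion of $s_1$. Then $\hamm(s_0,s_1)=|\ones(s_1)-k|\le\hamm(s^\star,s_1)\le d$, so by the triangle inequality $\hamm(s_0,s^\star)\le 2d$: a solution lies within Hamming distance $2d$ of $s_0$, and every single coordinate flip that makes $s$ agree with $s^\star$ in one more position decreases $\hamm(s,s^\star)$ by one.

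The recursion maintains a candidate $s$, initially $s_0$, with a budget of $2d$ flips. In one step: if $\ones(s)=k$ and $\hamm(s,s_i)\le(1+\epsilon)d$ for every $i$, report $s$ (a genuine solution). Otherwise pick a \emph{violated} vote $s_i$ with $\hamm(s,s_i)>(1+\epsilon)d$ and set $D_i=\{j:s[j]\ne s_i[j]\}$, so $|D_i|=\hamm(s,s_i)>(1+\epsilon)d$. The crucial observation is that, since $\hamm(s^\star,s_i)\le d$, the solution $s^\star$ agrees with $s_i$ — hence disagrees with $s$ — on at least $|D_i|-d>\epsilon d$ coordinates of $D_i$, so setting $s[j]:=s_i[j]$ at any such $j$ is a \emph{correct} flip; and along a branch that has made only correct flips so far we have $\hamm(s,s^\star)\le 2d$, hence $|D_i|=\hamm(s,s_i)\le\hamm(s,s^\star)+\hamm(s^\star,s_i)\le 3d$. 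So the branching step picks $j\in D_i$ uniformly at random and sets $s[j]:=s_i[j]$: along a correct branch this is a correct flip with probability at least $\epsilon d/(3d)=\epsilon/3$. Since $\hamm(s_0,s^\star)\le 2d$ correct flips already bring $s$ to $s^\star$ (a valid output), the random descent (truncated after $2d$ flips) reaches a valid output with probability at least $(\epsilon/3)^{2d}$. Running it independently $\Theta\!\big((3/\epsilon)^{2d}\log(1/p)\big)$ times and outputting any reported string (it is always a correct $(1+\epsilon)$-approximate solution), or else ``no solution at distance $\le d$'', gives the stated guarantee, and — each descent having $O(d)$ steps, a step costing $O(mn)$ (or $O(m+n)$ with incrementally maintained distances) to find a violated vote and sample from $D_i$ — the claimed running time.

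The step I expect to be the main obstacle is the exact-cardinality constraint $\ones(s)=k$: a single coordinate flip changes the number of ones, so the candidate one eventually wants to report need not have exactly $k$ ones. I would handle this by keeping ``$\ones(s)=k$'' in the stopping condition and supplying a guiding set also in the remaining case — no violated vote but $\ones(s)\ne k$ — e.g.\ if $\ones(s)>k$ by sampling the flipped coordinate from $A_1=\{j:s[j]=1,\ s_1[j]=0\}$, which along a correct branch has size $O(d)$ (as $\hamm(s,s_1)\le 3d$ there) and, since $\hamm(s^\star,s_1)\le d$, contains all but at most $d$ of the coordinates where $s[j]=1$ and $s^\star[j]=0$; an alternative is to keep $\ones(s)=k$ invariant throughout by flipping coordinates in compensating pairs. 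Verifying that such guided choices still succeed with probability $\Omega(\epsilon)$ per step — so that the $(3/\epsilon)^{2d}$ repetition count is unaffected — is the delicate part of the argument.
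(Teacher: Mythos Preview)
Your overall plan matches the paper's: start from a $k$-completion of $s_1$ (at distance $\le 2d$ from any solution $s^\star$), repeatedly pick a violated vote and make a random local move guided by it, and argue that each move is ``correct'' with probability $\Omega(\epsilon)$. You also correctly isolate the cardinality constraint as the real difficulty, and you list the paper's actual fix as one of your alternatives: the paper keeps $\ones(s)=k$ throughout by swapping a pair of coordinates, one from $P_0=\{j:s[j]=0,\ s_i[j]=1\}$ and one from $P_1=\{j:s[j]=1,\ s_i[j]=0\}$, in each step.

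The genuine gap is in the probability analysis for that pair move. Your counting argument --- $|D_i|>(1+\epsilon)d$ and $\hamm(s^\star,s_i)\le d$, so $s^\star$ agrees with $s_i$ on more than $\epsilon d$ positions of $D_i$ --- only yields $|P_0^\star|+|P_1^\star|>\epsilon d$, where $P_b^\star=\{j\in P_b:s^\star[j]=s_i[j]\}$. For the swap to be a correct move with probability $\Omega(\epsilon^2)$ you need $\min(|P_0^\star|,|P_1^\star|)=\Omega(\epsilon d)$, and nothing in your argument rules out, say, $|P_1^\star|=0$. The paper's key lemma proves $\min(|P_0^\star|,|P_1^\star|)\ge \epsilon d/2$, and the proof crucially uses the invariant $\ones(s)=\ones(s^\star)=k$: on the positions $Q=[m]\setminus D_i$ where $s$ and $s_i$ agree, one has $\hamm(s_i|_Q,s^\star|_Q)\ge |\ones(s^\star|_Q)-\ones(s|_Q)| = \bigl||P_1^\star|-|P_0^\star|\bigr|$, while on $P=D_i$ one has $\hamm(s_i|_P,s^\star|_P)=|P|-|P_0^\star|-|P_1^\star|$; summing and using $\hamm(s_i,s^\star)\le d$ and $|P|>(1+\epsilon)d$ gives $(1+\epsilon)d-2\min(|P_0^\star|,|P_1^\star|)\le d$. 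With this in hand the per-step success probability is at least $(\epsilon d/2)^2/(|P_0|\cdot|P_1|)\ge(\epsilon/3)^2$ (using $|P_0|+|P_1|\le 3d$ along a correct branch), and $d$ swap steps suffice, giving the $(3/\epsilon)^{2d}$ repetition bound.

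Your first proposed fix --- single flips plus a separate guided choice when no vote is violated but $\ones(s)\ne k$ --- does not obviously deliver an $\Omega(\epsilon)$ success probability per step: in that regime the number of ``correct'' $1\to 0$ positions inside your set $A_1$ can be as small as $O(1)$ even along a correct branch, so the argument would need a different idea. The paper sidesteps this entirely by never leaving $S_{k,m}$.
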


Let us proceed with the proof. 
In what follows we assume $p=1/2$, since then we can get the claim even if $p<1/2$ by repeating the whole algorithm $\ceil{\log_2 (1/p)}$ times. 
Indeed, then the algorithm returns incorrect answer only if each of the $\ceil{\log_2 (1/p)}$ repetitions returned incorrect answer, which happens with probabability at most $(1/2)^{\log_2 (1/p)}=p$.

Assume we are given a yes-instance and let us fix a solution $s^*\in S_{k,m}$, i.e., a string at distance at most $d$ from all the input strings.
Our approch is to begin with a string $x_0\in S_{k,m}$ not very far from $s^*$, and next perform a number of steps.
In $j$-th step we either conclude that $x_{j-1}$ is already a $(1+\epsilon)$-approximate solution, or with some probability we find another string $x_j$ which is closer to $s^*$.

First observe that if $|\ones(s_1)-k|>d$, then clearly there is no solution and our algorithm reports NO.
Hence in what follows we assume 
\begin{equation}
\label{eq:close}
 |\ones(s_1)-k|\le d.
\end{equation}
 We set $x_0$ to be any $k$-completion of $s_1$.
 By~\eqref{eq:close} we get $\hamm(x_0,s_1) \le d$. 
 Since $\hamm(s_1,s^*)\le d$, by the triangle inequality we get the following bound.
\begin{equation}
\label{eq:x0}
 \hamm(x_0,s^*) \le \hamm(x_0,s_1) + \hamm(s_1,s^*) \le 2d.
\end{equation}

 \begin{algorithm}[t]
  \SetAlgorithmName{Pseudocode}{}
  \DontPrintSemicolon 
  {\small
   \lIf{$|\ones(s_1)-k|>d$}{\Return NO}
   $x_0 \gets$\ any $k$-completion of $s_1$\;\label{alg1:completion}
   \For{$j \in \{1,2,\dots,d\}$}{
    \lIf{ $\hamm(x_{j-1},S) \le (1+\epsilon)d$ }{\Return $x_{j-1}$}\label{alg1:returnxj}
    otherwise there exists $s_i$ s.t. $\hamm(x_{j-1},s_i) > (1+\epsilon)d$\;\label{alg1:there-exists-si}
    $P_{j,0} \gets \{ a \in [m]: 0 = x_{j-1}[a] \neq s_i[a] = 1 \}$\;\label{alg1:p0}
    $P_{j,1} \gets \{ a \in [m]: 1 = x_{j-1}[a] \neq s_i[a] = 0 \}$\;\label{alg1:p1}
    \lIf{ $\min(|P_{j,0}|,|P_{j,1}|) = 0$ }{\Return NO}
    Get $x_{j}$ from $x_{j-1}$ by swapping $0$ and $1$ on random positions from $P_{j,0}$ and $P_{j,1}$)\;
   }
   \lIf{ $\hamm(x_d,S) \le (1+\epsilon)d$ }{\Return $x_d$}
   \lElse{ \Return NO }
  }
  \caption{Parameterized approximation scheme for \probMAV.}\label{fig:alg-fpt}
 \end{algorithm}
 
 Now we are ready to describe our algorithm precisely (see also Pseudocode~\ref{fig:alg-fpt}).
We begin with $x_0$ defined as above. Next for $j=1,\ldots,d$ we do the following.
If for every $i=1,\ldots,n$ we have $\hamm(x_{j-1},s_i) \le (1+\epsilon)d$ the algorithm terminates and returns $x_{j-1}$.
Otherwise, fix any $i=1,\ldots,n$ such that $\hamm(x_{j-1},s_i) > (1+\epsilon)d$.
Let  $P_{j,0} = \{ a \in [m]: 0 = x_{j-1}[a] \neq s_i[a] = 1 \}$ and $P_{j,1} = \{ a \in [m]: 1 = x_{j-1}[a] \neq s_i[a] = 0 \}$.
The algorithm samples a position $a_0 \in P_{j,0}$ and a position $a_1 \in P_{j,1}$. 
Then, $x_j$ is obtained from $x_{j-1}$ by swapping the $0$ at position $a_0$ with the $1$ at position $a_1$.
If the algorithm finishes without finding a solution, it reports NO. 

The following lemma is the key to get a lower bound on the probablity that the $x_j$'s get close to $s^*$.

 \begin{lemma}\label{lem:manygoodpositions}
 Let $x$ be a string in $S_{k,m}$ such that $\hamm(x,s_i) \ge (1+\epsilon)d$ for some $i=1,\ldots,n$.
 Let $s^* \in S_{k,m}$ be any solution, i.e., a string at distance at most $d$ from all the strings $s_i$, $i=1,\ldots,n$.
 Denote
  \[ P_0^* = \left\{ a \in [m]: 0 = x[a] \neq s_i[a] = s^*[a] = 1 \right\}, \]
  \[ P_1^* = \left\{ a \in [m]: 1 = x[a] \neq s_i[a] = s^*[a] = 0 \right\}. \]
  Then,
  \[ \min\left(\left|P_0^*\right|, \left|P_1^*\right|\right) \ge \frac{\epsilon d}{2}. \]
 \end{lemma}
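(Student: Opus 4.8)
The plan is to set up a careful counting argument based on the Hamming distances involved, exploiting the fact that all strings in play have the same Hamming weight $k$. First I would observe that since $x, s_i, s^* \in S_{k,m}$, all three have exactly $k$ ones, and therefore for any two of them the number of positions where the first has a $1$ and the second has a $0$ equals the number of positions where the first has a $0$ and the second has a $1$; call this common value $\tfrac{1}{2}\hamm(\cdot,\cdot)$. In particular, writing $P_0 = \{a : 0 = x[a] \neq s_i[a] = 1\}$ and $P_1 = \{a : 1 = x[a] \neq s_i[a] = 0\}$, we have $|P_0| = |P_1| = \tfrac{1}{2}\hamm(x,s_i) \ge \tfrac{1}{2}(1+\epsilon)d$. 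The sets $P_0^*, P_1^*$ are exactly the subsets of $P_0, P_1$ (respectively) where additionally $s^*$ agrees with $s_i$ rather than with $x$.

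The key step is to bound how much of $P_0$ and $P_1$ can fail to be ``captured'' by $s^*$. A position $a \in P_0 \setminus P_0^*$ is one where $x[a] = 0$, $s_i[a] = 1$, but $s^*[a] = 0$; thus $s^*$ and $s_i$ differ at $a$, with $s^*[a]=0$ and $s_i[a]=1$. Similarly a position $a \in P_1 \setminus P_1^*$ has $s^*[a]=1$, $s_i[a]=0$, so again $s^*$ and $s_i$ differ there. Hence $(P_0 \setminus P_0^*) \cup (P_1 \setminus P_1^*)$ is a set of positions, all distinct, on which $s^*$ and $s_i$ disagree, so $|P_0 \setminus P_0^*| + |P_1 \setminus P_1^*| \le \hamm(s^*, s_i) \le d$. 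Combining with $|P_0| = |P_1| \ge \tfrac{1}{2}(1+\epsilon)d$, I get
\[
 |P_0^*| + |P_1^*| = |P_0| + |P_1| - \bigl(|P_0 \setminus P_0^*| + |P_1 \setminus P_1^*|\bigr) \ge (1+\epsilon)d - d = \epsilon d.
\]
This alone only gives that the \emph{sum} is at least $\epsilon d$, not that the \emph{minimum} is at least $\epsilon d /2$, so a second ingredient is needed.

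To upgrade the sum bound to a minimum bound, I would argue that $|P_0^*|$ and $|P_1^*|$ cannot be too unbalanced, again via the equal-weight property. Consider the string $s^*$ restricted to the coordinate set $P_0 \cup P_1$ (on which $x$ and $s_i$ are ``complementary'' in the obvious sense). On $P_0$, $x$ has $0$s and $s_i$ has $1$s; on $P_1$, $x$ has $1$s and $s_i$ has $0$s. The set $P_0^*$ is where $s^*$ takes value $1$ within $P_0$, and $P_1^*$ is where $s^*$ takes value $0$ within $P_1$; write $q_0 = |P_0| - |P_0^*|$ for the number of $0$s of $s^*$ in $P_0$ and $q_1 = |P_1| - |P_1^*|$ for the number of $1$s of $s^*$ in $P_1$. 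Since $|P_0| = |P_1|$ and $x$ and $s^*$ both have weight $k$, comparing the count of $1$s of $x$ and of $s^*$ outside versus inside $P_0 \cup P_1$ forces $|P_0^*| + q_1 = $ (number of $1$s of $s^*$ in $P_0 \cup P_1$) to match the number of $1$s of $x$ there \emph{up to the net change of weight of $s^*$ relative to $x$ outside this set}; since both have total weight $k$, this net change is $-(\text{ones of } s^* \text{ outside}) + (\text{ones of } x \text{ outside})$, which I can bound by $\hamm(x,s^*)$ restricted appropriately. The cleanest version: the number of $1$s of $s^*$ in $P_0\cup P_1$ is $|P_0^*| + |P_1| - |P_1^*|$, and the number of $1$s of $x$ there is $|P_1|$, so these differ by $|P_0^*| - |P_1^*|$; but this difference is at most the number of positions outside $P_0 \cup P_1$ where $x$ and $s^*$ differ, which together with the positions inside where they differ sums to $\hamm(x,s^*) \le \hamm(x,s_i) $-- no, I need a bound independent of $\hamm(x,s_i)$.

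The main obstacle is exactly this balancing step: getting $\bigl||P_0^*| - |P_1^*|\bigr|$ controlled by something that, combined with $|P_0^*| + |P_1^*| \ge \epsilon d$, yields $\min \ge \epsilon d/2$. The right route is to note that $\hamm(x, s_i) = \hamm(x,s^*) $ plus a correction, or more directly: since $s^*$ is a valid solution, $\hamm(s^*, s_i) \le d$ while $\hamm(x,s_i) \ge (1+\epsilon)d$, and the ``extra'' distance $\hamm(x,s_i) - \hamm(s^*,s_i) \ge \epsilon d$ is entirely accounted for by positions where $x$ differs from $s_i$ but $s^*$ agrees with $s_i$ -- these are precisely $P_0^* \cup P_1^*$ counted with the right weight -- minus positions where $s^*$ differs but $x$ agrees. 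I would make this precise by writing $\hamm(x,s_i) - \hamm(s^*,s_i) = (|P_0^*| + |P_1^*|) - (\text{symmetric terms where } s^* \text{ differs from } s_i \text{ but } x \text{ does not})$, and then use the weight equality $\ones(x) = \ones(s^*) = k$ to show those symmetric terms are equal to $q_0 + q_1$ only after a further cancellation, ultimately pinning down $|P_0^*| = |P_1^*|$ exactly, or at least forcing each to be at least half the sum. Once $|P_0^*| = |P_1^*|$ (which I expect the equal-weight bookkeeping to deliver exactly, because every $1$ of $x$ moved to a $0$ must be compensated by a $0$ of $x$ moved to a $1$, and the same for $s^*$ relative to $s_i$), the bound $\min = \tfrac12(|P_0^*|+|P_1^*|) \ge \epsilon d/2$ follows immediately, completing the proof.
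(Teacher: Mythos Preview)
Your proposal contains a genuine gap rooted in a false premise. You write ``since $x, s_i, s^* \in S_{k,m}$, all three have exactly $k$ ones.'' This is not given: only $x$ and $s^*$ lie in $S_{k,m}$, while the vote $s_i$ is an arbitrary $0$--$1$ string. Consequently your claim $|P_0| = |P_1|$ is false in general, and more importantly your expected endgame ``$|P_0^*| = |P_1^*|$ exactly'' is also false. A concrete counterexample: take $k=10$, $x$ with ones at positions $1$--$10$, $s_i$ with ones at $6$--$30$, and $s^*$ with ones at $11$--$20$; then $|P_0^*|=10$ while $|P_1^*|=5$. So the balancing step cannot be completed the way you hope, and the proof as written does not close.

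Your sum bound $|P_0^*|+|P_1^*|\ge \epsilon d$ is correct (it only uses $|P_0|+|P_1|=\hamm(x,s_i)$, not $|P_0|=|P_1|$), and you are in fact one short step from the right argument. You already observed that the number of ones of $s^*$ in $P=P_0\cup P_1$ minus the number of ones of $x$ there equals $|P_0^*|-|P_1^*|$; since $\ones(x)=\ones(s^*)=k$, on $Q=[m]\setminus P$ the ones of $s^*$ and of $x$ differ by exactly $|P_1^*|-|P_0^*|$. Now use that $x$ and $s_i$ \emph{coincide} on $Q$: this gives $\hamm(s^*|_Q,s_i|_Q)\ge \bigl||P_0^*|-|P_1^*|\bigr|$. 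Combined with $\hamm(s^*|_P,s_i|_P)=|P|-|P_0^*|-|P_1^*|$ (on $P$, $s^*$ agrees with $s_i$ precisely on $P_0^*\cup P_1^*$), you obtain
\[
d \ \ge\ \hamm(s^*,s_i) \ \ge\ |P| - (|P_0^*|+|P_1^*|) + \bigl||P_0^*|-|P_1^*|\bigr| \ =\ |P| - 2\min(|P_0^*|,|P_1^*|) \ \ge\ (1+\epsilon)d - 2\min(|P_0^*|,|P_1^*|),
\]
which yields the claim. This is exactly the paper's route: the missing idea is to feed $\bigl||P_0^*|-|P_1^*|\bigr|$ back into $\hamm(s^*,s_i)$ via the $Q$-part (where $x$ equals $s_i$), rather than trying to bound it through $\hamm(x,s^*)$ or to force it to vanish.
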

 \begin{proof}
 
 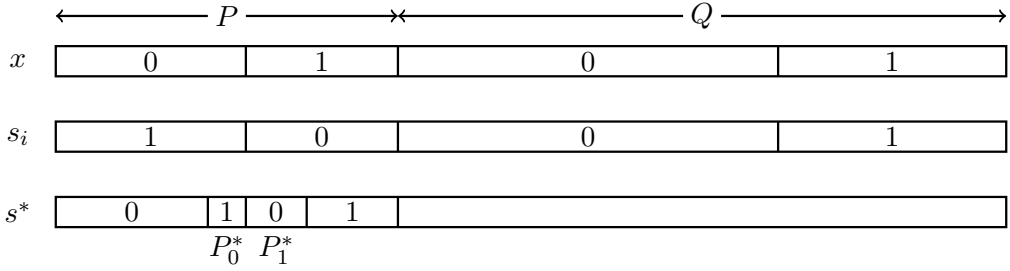
\begin{figure}[t]
 \centering
\begin{tikzpicture}[scale=1,line width=0.3mm]

\node at (1.75,2.8) {$P$};
\node at (8,2.8) {$Q$};
\node at (1.75,-0.3) {$P_0^*$};
\node at (2.4,-0.3) {$P_1^*$};

\draw[<-] (-0.5,2.8) -- (1.5,2.8);
\draw[->] (2,2.8) -- (4,2.8);
\draw[<-] (4,2.8) -- (7.75,2.8);
\draw[->] (8.25,2.8) -- (12,2.8);

\draw  (-0.5,0) rectangle (12,0.4);
\draw (2,0) -- (2,0.4);
\draw (4,0) -- (4,0.4);
\draw (1.5,0) -- (1.5,0.4);
\draw (2.8,0) -- (2.8,0.4);

\node at (-1,0.2) {$s^*$};

\node at (0.5,0.2) {$0$};
\node at (3.4,0.2) {$1$};
\node at (1.75,0.2) {$1$};
\node at (2.4,0.2) {$0$};

\draw  (-0.5,1) rectangle (12,1.4);
\draw (2,1) -- (2,1.4);
\draw (4,1) -- (4,1.4);
\draw (9,1) -- (9,1.4);

\node at (-1,1.2) {$s_i$};

\node at (0.75,1.2) {$1$};
\node at (3,1.2) {$0$};

\node at (6.5,1.2) {$0$};
\node at (10.5,1.2) {$1$};

\draw  (-0.5,2) rectangle (12,2.4);
\draw (2,2) -- (2,2.4);
\draw (4,2) -- (4,2.4);
\draw (9,2) -- (9,2.4);

\node at (-1,2.2) {$x$};

\node at (0.75,2.2) {$0$};
\node at (3,2.2) {$1$};

\node at (6.5,2.2) {$0$};
\node at (10.5,2.2) {$1$};

\end{tikzpicture}
\caption{\label{fig:3-strings}Strings $x$, $s_i$ and $s^*$ after permuting the letters.}
\end{figure}
 
 Let $P$ be the set of positions on which $x$ and $s_i$ differ, i.e., $P = \{ a \in [m]: x[a] \neq s_i[a] \}$. 
 (See Fig.~\ref{fig:3-strings}.)
 Note that $P_0^* \cup P_1^* \subseteq P$.
 Let $Q=[m]\setminus P$.
 
 The intuition behind the proof is that if $\min(|P_0^*|,|P_1^*|)$ is small, then $s^*$ differs too much from $s_i$, either because $s^*|_P$ is similar to $x|_P$ (when $|P_0^*|\approx |P_1^*|$) or because  $s^*|_Q$ has much more 1's than $s_i|_Q$ (when $|P_0^*|$ differs much from $|P_1^*|$).
 
 We begin with a couple of useful observations on the number of ones in different parts of $x$, $s_i$ and $s^*$.
 Since $x$ and $s_i$ are the same on $Q$, we get 
 \begin{equation}
 \label{eq:0}
 \ones(x|_Q)=\ones(s_i|_Q).
 \end{equation}
 Since $\ones(x)=\ones(s^*)$, we get $\ones(x|_P) + \ones(x|_Q) = \ones(s^*|_P) + \ones(s^*|_Q)$, and further 
\begin{equation}
 \label{eq:ie2}
  \ones(s^*|_Q) - \ones(x|_Q) = \ones(x|_P) - \ones(s^*|_P).
\end{equation}
 Finally note that 
 \begin{equation}
 \label{eq:ie3}
\ones(s^*|_P) = |P_0^*| + \ones(x|_P) - |P_1^*|.
\end{equation}

 We are going to derive a lower bound on $\hamm(s_i,s^*)$. First,
 \[
   \hamm(s_i|_P,s^*|_P) = |P| - (|P_0^*| + |P_1^*|) = \hamm(x,s_i) - (|P_0^*| + |P_1^*|) \ge (1+\epsilon)d - (|P_0^*| + |P_1^*|).
\]
 On the other hand,
\begin{align*}
 \hamm(s_i|_Q,s^*|_Q) & \ge |\ones(s^*|_Q) - \ones(s_i|_Q)| = \\
                      & \stackrel{\eqref{eq:0}}{=} |\ones(s^*|_Q) - \ones(x|_Q) = \\
                      & \stackrel{\eqref{eq:ie2}}{=} |\ones(x|_P) - \ones(s^*|_P)| = \\
                      & \stackrel{\eqref{eq:ie3}}{=} \left||P_1^*| - |P_0^*|\right|.\\
\end{align*}
It follows that
\begin{align*}
d \ge \hamm(s_i,s^*) = \hamm(s_i|_P,s^*|_P) + \hamm(s_i|_Q,s^*|_Q) & \ge (1+\epsilon)d - (|P_0^*| + |P_1^*|) + \left||P_1^*| - |P_0^*|\right| \\
                     & = (1+\epsilon)d - 2\min(|P_0^*|,|P_1^*|).
\end{align*}
Hence, $\min(|P_0^*|,|P_1^*|) \ge \frac{\epsilon d}{2}$ as required.
\end{proof}

\begin{corollary}
\label{cor:prob}
 Assume that there is a solution $s^*\in S_{k,m}$ and that the algorithm created a string $x_j$, for some $j=0,\ldots,d$.
 Then, 
 \[\prob{\hamm(x_j,s^*) \le 2d-2j} \ge \left(\frac{\epsilon}3 \right)^{2j}.\]
\end{corollary}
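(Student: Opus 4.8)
\medskip\noindent
The plan is to prove the corollary by induction on $j$, using the bound~\eqref{eq:x0} for the base case, Lemma~\ref{lem:manygoodpositions} to analyse a single iteration of the loop, and then multiplying the per-iteration success probabilities.

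For the base case $j=0$ there is essentially nothing to prove: whenever $x_0$ is created it is a $k$-completion of $s_1$, so by~\eqref{eq:x0} we have $\hamm(x_0,s^*)\le 2d=2d-2\cdot 0$ with certainty, and $(\epsilon/3)^{0}=1$. For the inductive step I would assume the bound for $j-1$ and suppose the algorithm creates $x_j$; then it created $x_{j-1}$, did not return at line~\ref{alg1:returnxj}, fixed some $s_i$ with $\hamm(x_{j-1},s_i)>(1+\epsilon)d$ as in line~\ref{alg1:there-exists-si}, and found $|P_{j,0}|,|P_{j,1}|\ge 1$. The key is to condition on the event $\mathcal{G}_{j-1}=\{\hamm(x_{j-1},s^*)\le 2d-2(j-1)\}$, which by the inductive hypothesis has probability at least $(\epsilon/3)^{2(j-1)}$, and then show that, conditioned on $\mathcal{G}_{j-1}$, the single random swap of this iteration yields $\hamm(x_j,s^*)\le 2d-2j$ with probability at least $(\epsilon/3)^2$; multiplying the two bounds gives $(\epsilon/3)^{2j}$.

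To prove that conditional per-step bound I would argue as follows. Since $\hamm(x_{j-1},s_i)\ge(1+\epsilon)d$, Lemma~\ref{lem:manygoodpositions} applied with $x=x_{j-1}$ and this $s_i$ produces sets $P_0^*\subseteq P_{j,0}$ and $P_1^*\subseteq P_{j,1}$ with $|P_0^*|,|P_1^*|\ge \epsilon d/2$. On $\mathcal{G}_{j-1}$ the triangle inequality gives
\[
 |P_{j,0}|+|P_{j,1}|=\hamm(x_{j-1},s_i)\le \hamm(x_{j-1},s^*)+\hamm(s^*,s_i)\le (2d-2(j-1))+d\le 3d,
\]
so by AM--GM $|P_{j,0}|\cdot|P_{j,1}|\le(3d/2)^2$. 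As $a_0\in P_{j,0}$ and $a_1\in P_{j,1}$ are drawn independently and uniformly,
\[
 \prob{a_0\in P_0^*\ \text{and}\ a_1\in P_1^*}=\frac{|P_0^*|}{|P_{j,0}|}\cdot\frac{|P_1^*|}{|P_{j,1}|}\ge\frac{(\epsilon d/2)^2}{(3d/2)^2}=\Big(\frac{\epsilon}{3}\Big)^{2}.
\]
Finally, when $a_0\in P_0^*$ and $a_1\in P_1^*$, replacing the $0$ at $a_0$ by $1$ and the $1$ at $a_1$ by $0$ turns two positions where $x_{j-1}$ disagreed with $s^*$ into agreements and changes nothing elsewhere, hence $\hamm(x_j,s^*)=\hamm(x_{j-1},s^*)-2\le 2d-2j$. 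This would close the induction.

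The step I expect to be the crux is the bound $|P_{j,0}|+|P_{j,1}|\le 3d$ on the sizes of the two sample spaces: a priori the offending string $s_i$ could differ from $x_{j-1}$ on far more than $3d$ positions, in which case the uniform samples would miss $P_0^*$ and $P_1^*$ with overwhelming probability; it is precisely the invariant $\hamm(x_{j-1},s^*)\le 2d$ inherited from~\eqref{eq:x0}, together with $s^*$ being a genuine solution, that keeps the swap ``on target''. A secondary, purely bookkeeping point that needs careful phrasing is that ``the algorithm creates $x_j$'' is itself a random event, nested inside ``the algorithm creates $x_{j-1}$''; but given the value of $x_{j-1}$ this is a deterministic condition (the test at line~\ref{alg1:returnxj}, together with the observation that the in-loop NO return never fires when a solution exists, since then $|\ones(s_i)-k|\le\hamm(s_i,s^*)\le d$ while $\min(|P_{j,0}|,|P_{j,1}|)=0$ would force $|\ones(x_{j-1})-\ones(s_i)|=\hamm(x_{j-1},s_i)>(1+\epsilon)d$), so $\mathcal{G}_{j-1}$ and the inductive hypothesis can be taken relative to the runs that reach iteration $j$ without affecting any of the estimates above.
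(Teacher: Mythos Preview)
Your proposal is correct and follows essentially the same approach as the paper: induction on $j$ with base case~\eqref{eq:x0}, the triangle inequality to bound $|P_{j,0}|+|P_{j,1}|\le 3d$ on the event $\mathcal{G}_{j-1}$, Lemma~\ref{lem:manygoodpositions} for the lower bounds $|P_0^*|,|P_1^*|\ge \epsilon d/2$, and then the product bound $(\epsilon d/2)^2/(3d/2)^2=(\epsilon/3)^2$ for the per-step success probability. Your AM--GM phrasing of the denominator bound and your explicit treatment of the conditioning on ``$x_j$ is created'' (including the observation that the in-loop NO cannot fire when a solution exists) are a bit more detailed than the paper's proof, but the argument is the same.
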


\begin{proof}
 We use induction on $j$. 
 For $j=0$ the claim follows from~\eqref{eq:x0}.
 Consider $j>0$. 
 By the induction hypothesis, 
 \begin{equation}
 \label{eq:pr1}
  \prob{\hamm(x_{j-1},s^*) \le 2d-2j+2} \ge \left(\frac{\epsilon}3 \right)^{2j-2}.
 \end{equation}
 Assume that $\hamm(x_{j-1},s^*) \le 2d-2j+2$.
 Since $x_j$ was created, $\hamm(x_{j-1},s_i) > (1+\epsilon)d$ for some $i=1,\ldots,n$.
 Since $\hamm(s^*,s_i) \le d$, by the triangle inequality we get the following.
 \begin{equation}
 \label{eq:pr2}
  |P_{j-1,0}| + |P_{j-1,1}| = \hamm(x_{j-1},s_i) \le \hamm(x_{j-1},s^*) + \hamm(s^*,s_i) \le 3d-2j+2 \le 3d. 
 \end{equation}
 Then, by Lemma~\ref{lem:manygoodpositions}
 \begin{equation}
 \label{eq:pr3}
 \prob{\hamm(x_{j},s^*) \le 2d-2j \ |\ \hamm(x_{j-1},s^*) \le 2d-2j+2} \ge \frac{|P_0^*|\cdot|P_1^*|}{|P_{j-1,0}| \cdot |P_{j-1,1}|}\ge \frac{\left(\frac{\epsilon d}{2}\right)^2}{\left(\frac{3d}{2}\right)^2} = \left(\frac{\epsilon}{3}\right)^2.
 \end{equation}
The claim follows by combining~\eqref{eq:pr1} and~\eqref{eq:pr3}.
\end{proof}

In order to increase the success probability, we repeat the algorithm until a solution is found or the number of repetitions is at least $\left({3}/{\epsilon}\right)^{2d}$.
By Corollary~\ref{cor:prob} the probablity that there is a solution but it was not found is bounded by
\[\left(1-\left(\frac{\epsilon}{3}\right)^{2d}\right)^{\left(3/\epsilon\right)^{2d}} = \left(1-\frac{1}{\left(3/\epsilon\right)^{2d}}\right)^{\left(3/\epsilon\right)^{2d}} \le e^{-1} < 1/2.\]
This finishes the proof of Theorem~\ref{thm:ptasForSmallOPT}.

\section{A fast polynomial time approximation scheme}
\label{sec:newptas}

The goal of this section is to present a PTAS for \probMAV running in time $n^{\mathcal{O}(1/\epsilon^2 \cdot \log(1/\epsilon))} \cdot \poly(m)$.
It is achieved by combining the parameterized approximation scheme from Theorem~\ref{thm:ptasForSmallOPT} with the following result, which might be of independent interest. 
Throughout this section $\OPT$ denotes the value of the optimum solution $s$ for the given instance $(\{s_i\}_{i=1,\ldots,n},k)$ of \probMAV, i.e., $\OPT=\max_{i=1,\ldots,n}\hamm(s,s_i)$,

\begin{theorem}\label{thm:ptasForBigOPT}
 There exists a randomized polynomial time algorithm which, for arbitrarily small fixed $p>0$, given an instance $(\{s_i\}_{i=1,\ldots,n},k)$ of \probMAV and any $\epsilon >0$ such that $\OPT\ge \frac{122 \ln n}{\epsilon^2}$, reports a solution, which with probabability at least $1-p$ is at distance at most $(1+\epsilon)\cdot\OPT$ from $S$.
\end{theorem}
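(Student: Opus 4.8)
The plan is to relax \probMAV to a linear program, round it by independent coin flips, and then repair the (approximately) wrong Hamming weight by a $k$-completion; the whole analysis rests on the Chernoff bounds of Theorem~\ref{def:chernoffbounds}. We may assume $\epsilon<1$, since for $\epsilon\ge 1$ the polynomial-time $2$-approximation of Caragiannis et al.~\cite{CaragiannisKM10} is already a $(1+\epsilon)$-approximation. First I would set up the LP with variables $y_j\in[0,1]$ for $j\in[m]$ (the intended probability that position $j$ carries a one), the constraint $\sum_{j\in[m]}y_j=k$, and, for every vote $s_i$, the constraint $\sum_{j:\,s_i[j]=0}y_j+\sum_{j:\,s_i[j]=1}(1-y_j)\le\tau$; we minimize $\tau$. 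The characteristic vector of the optimum solution $s^\star$ is feasible with $\tau=\OPT$, so the LP optimum $\tau^\star$ satisfies $\tau^\star\le\OPT$, and the LP is solvable in polynomial time. Fix an optimal point $y$ and let $X'$ be obtained by setting $X'[j]=1$ independently with probability $y_j$. For each vote $s_i$, $\hamm(X',s_i)$ is a sum of $m$ independent $0$-$1$ variables (the $j$-th being $1$ iff $X'[j]\ne s_i[j]$) with mean $\mu_i:=\sum_{j:s_i[j]=0}y_j+\sum_{j:s_i[j]=1}(1-y_j)\le\tau^\star\le\OPT$; choosing $\delta$ with $(1+\delta)\mu_i=(1+\epsilon/2)\OPT$ gives $\delta\ge\epsilon/2$ and $\delta\mu_i\ge(\epsilon/2)\OPT$, so~\eqref{eq:chernoff01plus} (or~\eqref{eq:chernoff1plus} when $\delta>1$) yields $\prob{\hamm(X',s_i)>(1+\tfrac{\epsilon}{2})\OPT}\le\exp\big(-\Omega(\epsilon^2\OPT)\big)$ for every $i$.

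The non-routine point is the weight constraint: after independent rounding $\ones(X')$ equals $k$ only in expectation, and its typical deviation is $\Theta(\sqrt{k})$, which can dwarf $\epsilon\OPT$ when $k\gg\OPT$, so a $k$-completion would ruin the approximation. The fix is to express this deviation through quantities whose means are bounded by $\OPT$. Fix any vote, say $s_1$, and put $A=\sum_{j:s_1[j]=0}X'[j]$ and $B=\sum_{j:s_1[j]=1}(1-X'[j])$. Then $\ones(X')=\ones(s_1)+A-B$, and since $\expected{\ones(X')}=k$ this rearranges to $\ones(X')-k=(A-\expected A)-(B-\expected B)$. Crucially $\expected A=\sum_{j:s_1[j]=0}y_j\le\mu_1\le\OPT$ and likewise $\expected B\le\OPT$, because $\expected A$ and $\expected B$ are exactly the two halves of the LP constraint for $s_1$. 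Hence $A$ and $B$ are sums of independent $0$-$1$ variables with means at most $\OPT$, and applying~\eqref{eq:chernoff01plus}--\eqref{eq:chernoff1minus} to each (two-sided, with absolute deviation $(\epsilon/4)\OPT$, which is $\le(\epsilon/4)$ times the mean in the relevant regime) gives $\prob{|A-\expected A|>(\epsilon/4)\OPT}$ and $\prob{|B-\expected B|>(\epsilon/4)\OPT}$, both at most $\exp\big(-\Omega(\epsilon^2\OPT)\big)$; on the complementary event $|\ones(X')-k|\le(\epsilon/2)\OPT$.

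Now I would take a union bound over the $n$ ``Hamming'' events and the $4$ ``weight'' events. Since $\OPT\ge\frac{122\ln n}{\epsilon^2}$, each failure probability is at most $n^{-c}$ for a large constant $c$, so all events hold simultaneously with probability at least, say, $1/2$ (the constant $122$ is tuned to absorb the $O(n)$ union-bound factor together with the constants hidden in the Chernoff exponents; the tiny cases such as $n\le 2$ are handled directly, e.g.\ $n=1$ is solved exactly by a $k$-completion of $s_1$). Conditioned on that event, let $X$ be any $k$-completion of $X'$; then for every $i$,
\[
 \hamm(X,s_i)\le\hamm(X',s_i)+\hamm(X',X)=\hamm(X',s_i)+|\ones(X')-k|\le\big(1+\tfrac{\epsilon}{2}\big)\OPT+\tfrac{\epsilon}{2}\OPT=(1+\epsilon)\OPT,
\]
so $X$ is a valid output. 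To raise the success probability from $1/2$ to $1-p$, repeat the rounding $\lceil\log_2(1/p)\rceil$ times and output the candidate minimizing $\hamm(\cdot,S)$. Solving one LP, running $O(\log(1/p))$ rounding attempts, and evaluating the $n$ distances are all polynomial, which gives the claimed running time.

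The main obstacle, as flagged above, is precisely the weight-repair step: independent rounding does not control $\ones(X')$ well enough by itself, and the whole proof hinges on the observation that $\ones(X')-k$ can be rewritten via $A$ and $B$, whose means are bounded by $\OPT$ rather than by $k$. Everything else — writing the LP, bounding $\tau^\star$ by $\OPT$, the per-vote Chernoff estimate, the union bound, and the probability boosting — is a standard LP-rounding argument.
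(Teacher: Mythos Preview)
Your proposal is correct and follows essentially the same approach as the paper: LP relaxation, independent randomized rounding, per-vote Chernoff bounds with a union bound over the $n$ votes, and the $k$-completion step combined via the triangle inequality. In particular, your key observation that $\ones(X')-k=(A-\expected A)-(B-\expected B)$ with $\expected A,\expected B\le\OPT$ is exactly the paper's Lemma~\ref{lem:prob-ones} (there the quantities are called $F_i$ and $E_i$), and the constant $122$ in the hypothesis is indeed what makes the explicit Chernoff exponents $\tfrac{1}{12}$ and $\tfrac{1}{48}$ beat the $O(n)$ union-bound factor.
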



In what follows, we prove Theorem~\ref{thm:ptasForBigOPT}.
As in the proof of Theorem~\ref{thm:ptasForSmallOPT} we assume w.l.o.g. $p=1/2$.
Note that we can assume $\epsilon <1$, for otherwise it suffices to use the 2-approximation of Caragiannis et al.~\cite{CaragiannisKM10}.
We also assume $n \ge 3$, for otherwise it is a straightforward exercise to find an optimal solution in linear time. 
 Let us define a linear program~(\ref{eq:ip_min}--\ref{eq:ip_xj_in_int01}):
 \begin{align}
  \text{minimize} \quad d &\label{eq:ip_min} &\\
  \sum_{j=1}^m x_j & = k\label{eq:ip_k_ones} &\\
  \sum_{\substack{j=1,\ldots,m \\ s_{i}[j]=1}} (1-x_j) + \sum_{\substack{j=1,\ldots,m \\ s_{i}[j]=0}} x_j & \le d & \forall i \in \{1,\dots,n\} \label{eq:ip_distance}\\
  x_j & \in  [0,1] & \forall j \in \{1,\dots,m\}\label{eq:ip_xj_in_int01}
 \end{align}

 The linear program~(\ref{eq:ip_min}--\ref{eq:ip_xj_in_int01}) is a relaxation of the natural integer program for \probMAV, obtained by replacing~\eqref{eq:ip_xj_in_int01} by the discrete constraint $x_j \in \{0,1\}$.
 Indeed, observe that $x_j$ corresponds to the $j$-th letter of the solution $x=x_1\cdots x_m$,~\eqref{eq:ip_k_ones} states that $\ones(x)=k$, and~\eqref{eq:ip_distance} states that $\hamm(x,S)\le d$. 
 
  \begin{algorithm}[t]
  \SetAlgorithmName{Pseudocode}{}
  \DontPrintSemicolon 
  {\small
   Solve the LP~(\ref{eq:ip_min}--\ref{eq:ip_xj_in_int01}) obtaining an optimal solution $(x_1^*,\ldots,x_m^*,d^*)$\;
   \For{$j \in \{1,2,\dots,m\}$}{
     Set $x[j] \gets 1$ with probability $x_j^*$ and $x[j] \gets 0$ with probability $1-x_j^*$ 
   }
   $y \gets$\ any $k$-completion of $x$\;\label{alg2:completion}
   \Return $y$
  }
  \caption{Parameterized approximation scheme for \probMAV}\label{alg:big-dist}
 \end{algorithm}
 
Our algorithm is as follows (see Pseudocode~\ref{alg:big-dist}).
 First we solve the linear program in time $\poly(n,m)$ using the interior point method \cite{Karmarkar84}. 
 Let $(x_1^*,\ldots,x_m^*,d^*)$ be the obtained optimal solution. 
 Clearly, $d^* \le \OPT$.
 We randomly construct a string $x\in\{0,1\}^m$, guided by the values $x_j^*$.
 More precisely, for every $j=1,\ldots,m$ independently, we set $x[j]=1$ with probabability $x_j^*$.
 Note that $x$ needs not contain $k$ ones.
 Let $y$ by any $k$-completion of $x$.
 The algorithm returns $y$.
 
 Clearly, the above algorithm runs in polynomial time.
 In what follows we bound the probability of error. 
 To this end we prove upper bounds on the probabability that $x$ is far from $S$ and the probabability that the number of ones in $x$ is far from $k$.
 This is done in Lemmas~\ref{lem:prob-x-far} and~\ref{lem:prob-ones}.
 
 \begin{lemma}
 \label{lem:prob-x-far}
 \[\probBigPar{\hamm(x,S) > (1+\tfrac{\epsilon}{2}) \cdot \OPT} \le \tfrac{1}4.\]
 \end{lemma}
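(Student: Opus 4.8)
The plan is to bound, for a fixed input string $s_i$, the probability that $\hamm(x,s_i)$ exceeds $(1+\tfrac{\epsilon}{2})\OPT$, and then take a union bound over all $n$ strings. Since $x$ is obtained by independent random rounding of $(x_1^*,\ldots,x_m^*)$, the quantity $\hamm(x,s_i)$ is a sum of independent $0$-$1$ random variables: for each position $j$ with $s_i[j]=1$ the indicator that $x[j]=0$ has probability $1-x_j^*$, and for each position $j$ with $s_i[j]=0$ the indicator that $x[j]=1$ has probability $x_j^*$. By linearity of expectation and LP constraint~\eqref{eq:ip_distance}, $\expected{\hamm(x,s_i)} \le d^* \le \OPT$. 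Write $\mu_i = \expected{\hamm(x,s_i)}$.

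First I would apply the Chernoff bound~\eqref{eq:chernoff01plus} to the sum $\hamm(x,s_i)$ with deviation parameter $\epsilon/2 \in (0,1)$. The subtlety is that the bound is most naturally phrased relative to $\mu_i$, whereas we want a statement relative to $\OPT \ge \mu_i$. Since $(1+\tfrac{\epsilon}{2})\OPT \ge (1+\tfrac{\epsilon}{2})\mu_i$, the event $\{\hamm(x,s_i) > (1+\tfrac{\epsilon}{2})\OPT\}$ is contained in $\{\hamm(x,s_i) > (1+\tfrac{\epsilon}{2})\mu_i\}$, so~\eqref{eq:chernoff01plus} gives
\[
\prob{\hamm(x,s_i) > (1+\tfrac{\epsilon}{2})\OPT} \le \exp\!\left(-\tfrac{1}{3}\cdot\tfrac{\epsilon^2}{4}\cdot\mu_i\right).
\]
This is only useful when $\mu_i$ is large, but $\mu_i$ could in principle be small. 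The fix is the standard one: if $\mu_i < \OPT$, replace the true probabilities by larger ones (equivalently, pad the sum with extra independent indicators) so that the expectation becomes exactly $\OPT$; a Chernoff-type upper tail bound is monotone under increasing the success probabilities, so the bound with $\mu_i$ replaced by $\OPT$ still holds for the original variable. Thus
\[
\prob{\hamm(x,s_i) > (1+\tfrac{\epsilon}{2})\OPT} \le \exp\!\left(-\tfrac{\epsilon^2}{12}\cdot\OPT\right).
\]

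Now I would plug in the hypothesis $\OPT \ge \frac{122\ln n}{\epsilon^2}$, which gives $\frac{\epsilon^2}{12}\OPT \ge \frac{122}{12}\ln n > 10\ln n$, so the per-string failure probability is at most $n^{-10}$. A union bound over the $n$ input strings yields $\prob{\hamm(x,S) > (1+\tfrac{\epsilon}{2})\OPT} \le n^{-9} \le \tfrac{1}{4}$ using $n \ge 3$. The constant $122$ is clearly chosen with slack to absorb both the factor $\tfrac{1}{3}\cdot\tfrac{1}{4}$ from the Chernoff exponent and the union bound, and to leave room for the analogous argument in Lemma~\ref{lem:prob-ones}.

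The main obstacle, such as it is, is the monotonicity/padding argument needed to pass from $\mu_i$ to $\OPT$ in the exponent — this is routine but must be stated carefully, since the Chernoff statement in Theorem~\ref{def:chernoffbounds} is phrased in terms of the actual expectation. Everything else is a direct calculation. One should also note that the rounding variables being independent across positions $j$ is exactly what licenses the use of Theorem~\ref{def:chernoffbounds}; this independence holds by construction of the algorithm.
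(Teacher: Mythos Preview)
Your proposal is correct and follows the same high-level structure as the paper: show $\expected{\hamm(x,s_i)}\le d^*\le\OPT$ via the LP constraint, apply a Chernoff upper-tail bound per string to get $\exp(-\epsilon^2\OPT/12)$, then take a union bound and plug in $\OPT\ge 122\ln n/\epsilon^2$ to land at $n^{-9}<\tfrac14$.

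The one place you and the paper diverge is in how to get $\OPT$ (rather than $\mu_i$) into the Chernoff exponent. The paper does \emph{not} use padding or stochastic domination; instead it sets $\delta=\tfrac{\epsilon\,\OPT}{2\mu_i}$, so that $(1+\delta)\mu_i=\mu_i+\tfrac{\epsilon}{2}\OPT$, and then splits into two cases: if $\delta\le 1$ it applies~\eqref{eq:chernoff01plus} and uses $\mu_i\le\OPT$ to simplify the exponent to $-\epsilon^2\OPT/12$; if $\delta>1$ it applies~\eqref{eq:chernoff1plus} to get $\exp(-\epsilon\OPT/6)$ and then uses $\epsilon<1$ to reach the same bound. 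Your coupling/padding trick is a legitimate and arguably cleaner alternative that avoids the case split and only needs~\eqref{eq:chernoff01plus}; just make sure when you write it up that the domination step is stated precisely (e.g., introduce $D_i'=D_i+Y$ with $Y\ge 0$ independent and $\expected{Y}=\OPT-\mu_i$, note $\prob{D_i>t}\le\prob{D_i'>t}$, and apply~\eqref{eq:chernoff01plus} to $D_i'$).
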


 \begin{proof}
  For every $i=1,\ldots,n$ we define a random variable $D_i$ that measures the distance between $x^*$ and $s_i$
\[ D_i = \sum_{\substack{j \in [m]\\ s_i[j]=1}} (1-x[j]) + \sum_{\substack{j \in [m]\\ s_i[j]=0}} x[j]. \]
 Note that $x[i]$ are independent 0-1 random variables.

  Using linearity of the expectation we obtain
 \begin{align}
  \expected{D_i} &= \expectedBigPar{\sum_{j \in [m], s_i[j]=1} (1-x[j]) + \sum_{j \in [m], s_i[j]=0} x[j]} = \nonumber\\
  &= \sum_{j \in [m], s_i[j]=1} (1-\expected{x[j]}) + \sum_{j \in [m], s_i[j]=0} \expected{x[j]} = \nonumber\\
  &= \sum_{j \in [m], s_i[j]=1} (1-x_j^*) + \sum_{j \in [m], s_i[j]=0} x_j^* \le \nonumber\\
  & \le d^* \le \OPT.\label{ineq:ED_ileqd^*} 
 \end{align}
 Note that $D_i$ is a sum of $m$ independent 0-1 random variables $X_j = 1-x[j]$ when $s_i[j]=1$ and $X_j = x[j]$ otherwise.
 Denote $\delta = \epsilon \cdot \frac{\OPT}{2\expected{D_i}}$. 
 We apply Chernoff bounds.
 For $\delta < 1$ we have
 \begin{align*} &\prob{D_i > \left(1+\tfrac{\epsilon}{2}\right) \cdot \OPT} \stackrel{\eqref{ineq:ED_ileqd^*}}{\le} \probBigPar{D_i > \expected{D_i} + \tfrac{\epsilon}{2} \cdot \OPT} = \probBigPar{D_i > \left(1+\delta\right) \cdot \expected{D_i}} \stackrel{\eqref{eq:chernoff01plus}}{\le} \nonumber\\
 &\le \exp\left(-\frac{1}{3}\left(\epsilon \cdot \frac{\OPT}{2\expected{D_i}}\right)^2 \expected{D_i} \right) \stackrel{\eqref{ineq:ED_ileqd^*}}{\le} \exp\left(-\frac{\epsilon^2 \cdot \OPT}{12}\right).
 \end{align*}
 In case $\delta \ge 1$ we proceed analogously, using the Chernoff bound \eqref{eq:chernoff1plus}
  \begin{align*} &\prob{D_i > \left(1+\tfrac{\epsilon}{2}\right) \cdot \OPT} \stackrel{\eqref{eq:chernoff1plus}}{\le} \exp\left(-\frac{\epsilon \cdot \OPT}{6}\right) \stackrel{1 > \epsilon}{\le} \exp\left(-\frac{\epsilon^2 \cdot \OPT}{12}\right).
 \end{align*}
 Now we use the union bound to get the claim.
 \begin{align}
  \probBigPar{\hamm(x,S) \le (1+\tfrac{\epsilon}{2}) \cdot \OPT} &= \probBigPar{\exists i \in [n] \quad D_i > \left(1+\tfrac{\epsilon}{2}\right) \cdot \OPT} \le \nonumber\\
  &\le n \cdot \exp\left(-\frac{\epsilon^2 \cdot \OPT}{12}\right) \le \nonumber\\
  &\le n \cdot \exp\left(-\frac{\frac{122 \ln n}{\OPT} \cdot \OPT}{12}\right) < \nonumber\\
  &< n^{-9} \; \stackrel{n \ge 3}{<} \; \frac{1}{4}.
  \end{align}
  \end{proof}
 
 \begin{lemma}
 \label{lem:prob-ones}
  \[ \probBigPar{|\ones(x)-k| > \tfrac{\epsilon}{2} \cdot \OPT} < \frac{1}{4}.\]
 \end{lemma}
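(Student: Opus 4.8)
The LP equality~\eqref{eq:ip_k_ones} gives $\expectedBigPar{\ones(x)}=\sum_{j=1}^m x_j^*=k$, so $\ones(x)$ is a sum of $m$ independent $0$-$1$ variables whose mean is exactly $k$, and the plan is to prove concentration within $\tfrac{\epsilon}{2}\OPT$ of $k$ by Chernoff bounds, mirroring the proof of Lemma~\ref{lem:prob-x-far}. The obstacle is that applying Chernoff directly to $\ones(x)$ is too weak: it controls deviations on the scale $\sqrt{k}$, while we need the much finer scale $\tfrac{\epsilon}{2}\OPT$, and $k$ can be arbitrarily larger than $\OPT$ (e.g.\ when every vote has Hamming weight close to $m$). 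So the first step is to ``localize'' $\ones(x)$ against a single vote in order to expose that its true fluctuation scale is governed by $\OPT$, not by $k$.

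Concretely, fix the vote $s_1$, put $A=\{j\in[m]:s_1[j]=1\}$, and define the random variables $U=\sum_{j\in A}(1-x[j])$ and $W=\sum_{j\in[m]\setminus A}x[j]$ (the numbers of positions where $x$ disagrees with $s_1$ by a $0$, resp.\ by a $1$). Both are sums of independent $0$-$1$ variables, and since $\ones(x)=|A|-U+W$ while $k=\sum_j x_j^*=|A|-\expectedBigPar{U}+\expectedBigPar{W}$ by~\eqref{eq:ip_k_ones}, we get $\ones(x)-k=-\bigl(U-\expectedBigPar{U}\bigr)+\bigl(W-\expectedBigPar{W}\bigr)$. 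The key point is that $\expectedBigPar{U}+\expectedBigPar{W}=\sum_{j\in A}(1-x_j^*)+\sum_{j\notin A}x_j^*\le d^*\le\OPT$ by constraint~\eqref{eq:ip_distance} for $i=1$; in particular $\expectedBigPar{U}\le\OPT$ and $\expectedBigPar{W}\le\OPT$.

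It now suffices to bound $\probBigPar{|U-\expectedBigPar{U}|>\tfrac{\epsilon}{4}\OPT}$ and its analogue for $W$, since if both events fail then $|\ones(x)-k|\le\tfrac{\epsilon}{2}\OPT$. For $U$ (symmetrically $W$), write $\tfrac{\epsilon}{4}\OPT=\delta\expectedBigPar{U}$ with $\delta=\tfrac{\epsilon\OPT}{4\expectedBigPar{U}}\ge\tfrac{\epsilon}{4}$ (using $\expectedBigPar{U}\le\OPT$), and apply the Chernoff bounds~\eqref{eq:chernoff01plus}--\eqref{eq:chernoff1minus}, splitting into the cases $\delta\le 1$ and $\delta>1$ exactly as in Lemma~\ref{lem:prob-x-far}; one checks that in every regime the tail is at most $\exp(-c\,\epsilon^2\OPT)$ for an absolute constant $c>0$ (e.g.\ $c=\tfrac1{48}$ works, since $\delta^2\expectedBigPar{U}\ge\tfrac{\epsilon^2\OPT}{16}$ in the relevant regime, and the other regimes give even smaller bounds using $\epsilon<1$). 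A union bound over the four tails (two for $U$, two for $W$) gives $\probBigPar{|\ones(x)-k|>\tfrac{\epsilon}{2}\OPT}\le 4\exp(-c\,\epsilon^2\OPT)$, and substituting $\epsilon^2\OPT\ge 122\ln n$ together with $n\ge 3$ makes this strictly less than $\tfrac14$ — the constant $122$ in the hypothesis is chosen precisely to absorb the union-bound factor and the loss between the two Chernoff regimes. (Alternatively one can finish via Chebyshev: the same localization shows $\Var(\ones(x))=\sum_j x_j^*(1-x_j^*)\le\sum_{j\in A}(1-x_j^*)+\sum_{j\notin A}x_j^*\le d^*\le\OPT$, whence $\probBigPar{|\ones(x)-k|\ge\tfrac{\epsilon}{2}\OPT}\le\Var(\ones(x))/(\tfrac{\epsilon}{2}\OPT)^2\le 4/(\epsilon^2\OPT)<\tfrac14$.) The only genuinely non-routine step is the decomposition $\ones(x)-k=-(U-\expectedBigPar{U})+(W-\expectedBigPar{W})$ with $\expectedBigPar{U}+\expectedBigPar{W}\le\OPT$; after that it is the same Chernoff computation already carried out for Lemma~\ref{lem:prob-x-far}.
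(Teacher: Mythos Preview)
Your proof is correct and is essentially identical to the paper's own argument: the paper also fixes a vote $s_i$, defines $E_i=\sum_{j:s_i[j]=1}(1-x[j])$ and $F_i=\sum_{j:s_i[j]=0}x[j]$ (your $U$ and $W$), uses $\ones(x)-k=-(E_i-\expected{E_i})+(F_i-\expected{F_i})$ together with $\expected{E_i},\expected{F_i}\le\OPT$, applies the same Chernoff case split to get $4\exp(-\tfrac{1}{48}\epsilon^2\OPT)$, and finishes with $\epsilon^2\OPT\ge 122\ln n$ and $n\ge 3$. Your Chebyshev alternative is a pleasant bonus not in the paper and in fact gives a shorter finish here.
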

 
 \begin{proof}
  First we note that
  \begin{equation}\label{eq:EZeqk}
   \expected{\ones(x)} = \mathbb{E}\Big[\sum_{j \in [m]} x[j]\Big] = \sum_{j \in [m]} \expected{x[j]} = \sum_{j \in [m]} x_j^* \stackrel{\eqref{eq:ip_k_ones}}{=} k.  
  \end{equation}
  Pick an $i=1,\ldots,n$. Define the random variables
  \[ E_i =  \sum_{j \in [m], s_i[j]=1} (1-x[j]) , \quad F_i =  \sum_{j \in [m], s_i[j]=0} x[j].\]
  Let $D_i = E_i + F_i$, as in the proof of Lemma~\ref{lem:prob-x-far}. 
  By~\eqref{ineq:ED_ileqd^*}  we have
  \begin{equation}\label{ineq:EEileOPT}
    \expected{E_i} \le \expected{E_i} + \expected{F_i} = \expected{D_i} \le \OPT
  \end{equation}
  and analogously
  \begin{equation}
    \expected{F_i} \le \OPT.
  \end{equation}

  Both $E_i$ and $F_i$ are sums of independent 0-1 random variables and we apply Chernoff bounds as follows.  
  When $\frac{1}{4}\epsilon \cdot \frac{\OPT}{\expected{E_i}} \le 1$ then using \eqref{eq:chernoff01plus} and \eqref{eq:chernoff01minus} we obtain
  \begin{align*}
   &\probBigPar{ \Bigl| E_i - \expected{E_i} \Bigr| > \frac{1}{4}\epsilon \cdot \OPT } \stackrel{\eqref{eq:chernoff01plus}, \eqref{eq:chernoff01minus}}{\le} \\
   &\le \exp\left(-\frac{1}{3} \cdot \frac{1}{16}\epsilon^2 \cdot \frac{(\OPT)^2}{\mathbb{E}^2\left[E_i\right]} \cdot \expected{E_i} \right) + \exp\left(-\frac{1}{2} \cdot \frac{1}{16}\epsilon^2 \cdot \frac{(\OPT)^2}{\mathbb{E}^2\left[E_i\right]} \cdot \expected{E_i} \right) \stackrel{\ref{ineq:EEileOPT}}{\le} \\
   &\le 2 \cdot \exp\left(-\frac{1}{48} \epsilon^2 \cdot \OPT \right),
  \end{align*}
  otherwise $\left(\frac{1}{4}\epsilon \cdot \frac{\OPT}{\expected{E_i}} > 1\right)$, using \eqref{eq:chernoff1plus} and \eqref{eq:chernoff1minus}, we have
  \begin{align*}
   &\probBigPar{ \Bigl| E_i - \expected{E_i} \Bigr| > \frac{1}{4}\epsilon \cdot \OPT } \stackrel{\eqref{eq:chernoff1plus}, \eqref{eq:chernoff1minus}}{\le} \\
   &\le \exp\left(-\frac{1}{3} \cdot \frac{1}{4}\epsilon \cdot \frac{\OPT}{\expected{E_i}} \cdot \expected{E_i} \right) + 0 \le \\
   &\le \exp\left(-\frac{1}{12} \epsilon \cdot \OPT \right) \stackrel{1>\epsilon}{\le} 2 \cdot \exp\left(-\frac{1}{48} \epsilon^2 \cdot \OPT \right).
  \end{align*}
  To sum up, in both cases we have shown that
  \begin{equation}\label{eq:Ei_bound}
   \probBigPar{ \Bigl| E_i - \expected{E_i} \Bigr| > \frac{1}{4}\epsilon \cdot \OPT } \le 2 \cdot \exp\left(-\frac{1}{48} \epsilon^2 \cdot \OPT \right).
  \end{equation}
  Similarly we show
  \begin{equation}\label{eq:Fi_bound}
   \probBigPar{ \Bigl| F_i - \expected{F_i} \Bigr| > \frac{1}{4}\epsilon \cdot \OPT } \le 2 \cdot \exp\left(-\frac{1}{48} \epsilon^2 \cdot \OPT \right).
  \end{equation}
  We see that
  \begin{equation}\label{eq:Zeq}
   \ones(x) = \sum_{j \in [m]} x[j] = \ones(s_i) - \sum_{j \in [m], s_i[j]=1} (1-x[j]) + \sum_{j \in [m], s_i[j]=0} x[j] = \ones(s_i) - E_i + F_i  
  \end{equation}
  and hence
  \begin{equation}\label{eq:EZeq}
   \expected{\ones(x)} = \ones(s_i) - \expected{E_i} + \expected{F_i}.  
  \end{equation}
  Additionally we will use
  \begin{equation}\label{eq:xyineq}
   \forall x,y \in \mathbb{R} \quad |x-y| > a \implies |x| > a/2 \;\vee\; |y| > a/2.
  \end{equation}
  Now we can write
  \begin{align*}
   &\probBigPar{ \Bigl| \ones(x) - k \Bigr| > \tfrac{1}{2} \epsilon \cdot \OPT } \stackrel{\eqref{eq:EZeqk}}{=} \probBigPar{ \Bigl| \ones(x) - \expected{\ones(x)} \Bigr| > \tfrac{1}{2} \epsilon \cdot \OPT } \stackrel{\eqref{eq:Zeq},\eqref{eq:EZeq}}{=} \\
   &=  \probBigPar{ \Bigl| \ones(s_i) - E_i + F_i - \ones(s_i) + \expected{E_i} - \expected{F_i} \Bigr| > \tfrac{1}{2} \epsilon \cdot \OPT } \stackrel{\eqref{eq:xyineq}}{\le} \\
   &\le \probBigPar{ \Bigl| E_i - \expected{E_i} \Bigr| > \tfrac{1}{4} \epsilon \cdot \OPT \quad \vee \quad \Bigl| F_i - \expected{F_i} \Bigr| > \tfrac{1}{4} \epsilon \cdot \OPT } \le \\
   &\le \probBigPar{ \Bigl| E_i - \expected{E_i} \Bigr| > \tfrac{1}{4} \epsilon \cdot \OPT } + \probBigPar{ \Bigl| F_i - \expected{F_i} \Bigr| > \tfrac{1}{4} \epsilon \cdot \OPT } \stackrel{\eqref{eq:Ei_bound},\eqref{eq:Fi_bound}}{\le} \\
   &\le 4 \cdot \exp\left(-\tfrac{1}{48} \epsilon^2 \cdot \OPT \right) \stackrel{\text{assum.}}{\le} 4 \cdot \exp\left(-\tfrac{122}{48} \ln n \right) \stackrel{n \ge 3}{<} \tfrac{1}{4}.
  \end{align*}
 \end{proof}
 
 Now we can finish the proof of Theorem~\ref{thm:ptasForBigOPT}.
 By Lemmas~\ref{lem:prob-x-far} and~\ref{lem:prob-ones} with probabability at least $1/2$ both $\hamm(x,S) \le (1+\tfrac{1}{2}\epsilon) \cdot \OPT$ and $\hamm(y,x)=|\ones(x)-k| \le \tfrac{1}{2} \epsilon \cdot \OPT$. By triangle inequality this implies that $\hamm(y,S) \le (1+\epsilon) \cdot \OPT$, with probability at least $1/2$ as required.
 
 We conclude the section by combining Theorems~\ref{thm:ptasForSmallOPT} and~\ref{thm:ptasForBigOPT} to get a fast PTAS.

\begin{theorem}
 For each $\epsilon > 0$ we can find $(1+\epsilon)$-approximation solution for the \probMAV problem in time $n^{\mathcal{O}\left(\frac{\log{1/\epsilon}}{\epsilon^2} \right)} \cdot \poly(m)$ with probability at least $1-r$, for any fixed $r>0$.
\end{theorem}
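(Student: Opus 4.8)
The plan is to run both algorithms of Theorems~\ref{thm:ptasForSmallOPT} and~\ref{thm:ptasForBigOPT} and combine their outputs: Theorem~\ref{thm:ptasForBigOPT} takes care of instances whose optimum is large, while Theorem~\ref{thm:ptasForSmallOPT}, tried for every small target distance, takes care of the rest. First I would dispose of corner cases. If $\epsilon\ge 1$ we simply output the $2$-approximation of Caragiannis et al.~\cite{CaragiannisKM10}, and if $n\le 2$ we solve the instance optimally in linear time; so from now on assume $\epsilon\in(0,1)$ and $n\ge 3$. Set $\tau=\ceil{122\ln n/\epsilon^2}$. The algorithm then does the following: (a)~for every integer $d\in\{0,1,\dots,\tau\}$ it runs the parameterized approximation scheme of Theorem~\ref{thm:ptasForSmallOPT} on $(\{s_i\}_{i=1,\dots,n},k,d)$ with accuracy $\epsilon$; (b)~it runs the algorithm of Theorem~\ref{thm:ptasForBigOPT} on $(\{s_i\}_{i=1,\dots,n},k)$ with accuracy $\epsilon$; (c)~among all strings returned in steps (a) and (b) it outputs one minimizing $\hamm(\cdot,S)$. (Each call in (a) either returns a string of $S_{k,m}$ or reports that no solution at distance at most $d$ exists, and (b) always returns a string of $S_{k,m}$, so step (c) is well defined.)

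For the running time, the call in (a) for a fixed $d$ costs $\Oh\bigl((3/\epsilon)^{2d}mn\bigr)$, so step (a) is dominated by $d=\tau$ and costs $\Oh\bigl((3/\epsilon)^{2\tau}\,mn\tau\bigr)$. Since $\tau=\Oh(\log n/\epsilon^2)$ and (for $\epsilon<1$) $\ln(3/\epsilon)=\Oh(1+\log(1/\epsilon))$, we get $(3/\epsilon)^{2\tau}=\exp\bigl(\Oh(\log(1/\epsilon)\cdot\log n/\epsilon^2)\bigr)=n^{\Oh(\log(1/\epsilon)/\epsilon^2)}$; hence step (a) runs in $n^{\Oh(\log(1/\epsilon)/\epsilon^2)}\poly(m)$, and step (b) is polynomial in $n$ and $m$. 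The overall bound $n^{\Oh(\log(1/\epsilon)/\epsilon^2)}\poly(m)$ is unaffected by the probability amplification described next, which only multiplies the running time by $\Oh(\log((\tau+2)/r))$.

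For correctness, let $\OPT$ denote the optimum and amplify each of the at most $\tau+2$ subroutine calls, by independent repetition, so that its failure probability drops below $r/(\tau+2)$. Let $\mathcal{E}$ be the event that every call behaves as its theorem guarantees; by the union bound $\prob{\mathcal{E}}\ge 1-r$. Condition on $\mathcal{E}$. If $\OPT\ge 122\ln n/\epsilon^2$, then by Theorem~\ref{thm:ptasForBigOPT} the string produced in step (b) is at distance at most $(1+\epsilon)\OPT$ from $S$. Otherwise $\OPT<122\ln n/\epsilon^2\le\tau$, so $d=\OPT$ is among the distances tried in step (a); since the optimal string is itself a solution at distance at most $d$, Theorem~\ref{thm:ptasForSmallOPT} rules out the ``no solution'' answer for this call, so it returns a string at distance at most $(1+\epsilon)d=(1+\epsilon)\OPT$. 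In either case some candidate lies within $(1+\epsilon)\OPT$ of $S$, hence so does the minimizer output in step (c); and since that string belongs to $S_{k,m}$, it is a valid $(1+\epsilon)$-approximate solution, as required.

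I do not expect a real obstacle here — the calculations are routine — the only point that needs a little attention is the interface between the two regimes: one must be sure that whenever the precondition $\OPT\ge 122\ln n/\epsilon^2$ of Theorem~\ref{thm:ptasForBigOPT} fails, the value $\OPT$ is still among the distances enumerated in step (a). Taking the threshold to be exactly $\tau=\ceil{122\ln n/\epsilon^2}$ makes the two regimes overlap and settles this; everything else is the union bound together with the observation that returning the best of several valid candidates is never worse than returning any single one.
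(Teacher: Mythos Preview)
Your proof is correct and follows essentially the same approach as the paper: set the threshold $\tau=\lceil 122\ln n/\epsilon^2\rceil$, cover the small-$\OPT$ regime by invoking Theorem~\ref{thm:ptasForSmallOPT} for every $d\le\tau$, cover the large-$\OPT$ regime with Theorem~\ref{thm:ptasForBigOPT}, and return the best candidate. The only organizational difference is that the paper first calls Theorem~\ref{thm:ptasForSmallOPT} once at $d=\tau$ and branches on its answer (running the small-$d$ sweep only if it returns a solution, and Theorem~\ref{thm:ptasForBigOPT} only if it returns NO), which lets it get away with failure probability $r/2$ per call rather than your $r/(\tau+2)$; your unconditional ``run everything and take the minimum'' is arguably cleaner, and the extra $\Oh(\log(\tau/r))$ amplification factor is harmless.
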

\begin{proof}
 First we run algorithm from Theorem~\ref{thm:ptasForSmallOPT} for $d=\lceil\frac{122 \ln n}{\epsilon^2}\rceil$ and $p=r/2$.
 
 If it reports a solution, for every $d'\le d$ we apply Theorem~\ref{thm:ptasForSmallOPT} with $p=r/2$ and we return the best solution.
 If $\OPT \ge d$, even the initial solution is at distance at most $(1+\epsilon)d \le (1+\epsilon)\OPT$ from $S$.
 Otherwise, at some point $d'=\OPT$ and we get $(1+\epsilon)$-approximation with probability at least $1-r/2 > 1-r$.
 
 In the case when the initial run of the algorithm from Theorem~\ref{thm:ptasForSmallOPT} reports NO, we just apply the algorithm from Theorem~\ref{thm:ptasForBigOPT}, again with $p=r/2$. 
 With probability at least $1-r/2$ the answer NO of the algorithm from Theorem~\ref{thm:ptasForSmallOPT} is correct.
 Conditioned on that, we know that $\OPT > d \ge \frac{122 \ln n}{\epsilon^2}$ and then the algorithm from Theorem~\ref{thm:ptasForBigOPT} returns a $(1+\epsilon)$-approximation with probability at least $1-r/2$.
 Thus, the answer is correct with probabability at least $(1-r/2)^2 > 1-r$.
 
 The total running time can be bounded as follows.
 \[ \mathcal{O}^*\left(\left(\frac{3}{\epsilon}\right)^{ \frac{244 \ln n}{\epsilon^2}}\right) \subseteq \mathcal{O}^*\left( n^{ \mathcal{O}\left(\frac{\ln 1/\epsilon}{\epsilon^2}\right)}\right) \subseteq n^{ \mathcal{O}\left(\frac{\log 1/\epsilon}{\epsilon^2}\right)} \cdot \poly(m).\]
\end{proof}

\section{Further research}\label{sec:concludingremarks}
We conclude the paper with some questions related to this work that are left unanswered.
Our PTAS for \probMAV is randomized, and it seems there is no direct way of derandomizing it.
It might be interesting to find an equally fast deterministic PTAS.
The second question is whether there are even faster PTASes for \probCS or \probMAV.
Recently, Cygan, Lokshtanov, Pilipczuk, Pilipczuk and Saurabh~\cite{CyganLPPS16} showed that under ETH, there is no PTAS in time $f(\epsilon)\cdot n^{o(1/\epsilon)}$ for \probCS.
This extends to the same lower bound for \probMAV, since we can try all values $k=0,1,\ldots,m$.
It is a challenging open problem to close the gap in the running time of PTAS either for \probCS or for \probMAV.

\newpage



\end{document}